\documentclass{article}

\usepackage[margin=1in]{geometry}

\usepackage{amsmath, amssymb, amsfonts}

\usepackage{amsthm}
\theoremstyle{plain}
\newtheorem{theorem}{Theorem}[section]
\newtheorem{lemma}[theorem]{Lemma}

\theoremstyle{remark}
\newtheorem{remark}[theorem]{Remark}

\theoremstyle{definition}
\newtheorem{example}[theorem]{Example}

\usepackage{newtxtext}
\usepackage[subscriptcorrection]{newtxmath}

\usepackage{graphicx}
\graphicspath{{figures/}{}}

\usepackage[plain,noend]{algorithm2e}
\makeatletter
\renewcommand{\algocf@captiontext}[2]{#1\algocf@typo. \AlCapFnt{}#2}

\def\@algocf@capt@plain{top}
\renewcommand{\algocf@makecaption}[2]{%
  \addtolength{\hsize}{\algomargin}%
  \sbox\@tempboxa{\algocf@captiontext{#1}{#2}}%
  \ifdim\wd\@tempboxa >\hsize%
    \hskip .5\algomargin%
    \parbox[t]{\hsize}{\algocf@captiontext{#1}{#2}}%
  \else%
    \global\@minipagefalse%
    \hbox to\hsize{\box\@tempboxa}%
  \fi%
  \addtolength{\hsize}{-\algomargin}%
}
\makeatother

\usepackage{natbib}

\usepackage{booktabs}
\usepackage{subcaption}
\usepackage{float}

\usepackage{hyperref}

\protected\def\[#1\]{\begin{equation}\begin{aligned}#1\end{aligned}\end{equation}}
\protected\def\(#1\){\begin{equation*}\begin{aligned}#1\end{aligned}\end{equation*}}

\newenvironment{keywords}{\smallskip\noindent\textit{Keywords:}}{\par}

\title{Relaxation of Projected Prior with Continuous Gap Shrinkage}

\author{
  Leo L. Duan\thanks{Department of Statistics, University of Florida, Gainesville, Florida 32611, U.S.A. Email: \texttt{li.duan@ufl.edu}.}
  \and
  Sunghyun Cho\thanks{Department of Statistics, University of Florida, Gainesville, Florida 32611, U.S.A. Email: \texttt{sunghyuncho@ufl.edu}.}
  \and
  Mingzhang Yin\thanks{Department of Marketing, University of Florida, Gainesville, Florida 32611, U.S.A. Email: \texttt{mingzhang.yin@warrington.ufl.edu}.}
}

\date{}

\begin{document}

\maketitle

\begin{abstract}
Projected priors were originally introduced to accommodate parameter constraints, but have recently regained popularity due to their ability to assign probability mass to low-dimensional parameter sets, such as the spaces of sparse vectors, directed acyclic graphs, or transport plans. When employed as a transformation of random variables, projection is especially useful, since its contraction property not only preserves probability concentration, but also often preserves differentiability for gradient-based posterior computation.
On the other hand, unless the projection can be obtained by some non-iterative algorithm, posterior computation can be expensive because it requires nesting an iterative optimization routine within each Markov chain Monte Carlo iteration. In this article, inspired by the success of continuous shrinkage models as replacements for discrete spike-and-slab priors, we propose a continuous relaxation of projected priors. The key idea is to quantify the duality gap between the primal projection loss and the dual objective, and impose a probabilistic prior that shrinks this gap toward zero. The resulting gap-shrinkage prior has a tractable form, does not require running an optimization subroutine inside each posterior update, and puts probability mass near the exact projection. We demonstrate useful properties of gap-shrinkage priors, including connections to global-local shrinkage priors, broad applicability to generalized projection functions, and competitive performance in posterior contraction. We apply the gap-shrinkage model to a marketing data analysis aimed at identifying important predictor effects on multivariate grocery-shopping decisions.
\end{abstract}

\begin{keywords}
Basket data analysis; Contraction rate; Duality; Fenchel--Young gap; Proximal mapping.
\end{keywords}

\section{Introduction}
Modern statistical applications often face problems of structured inference: the parameter of interest is not free to vary over a full Euclidean space, but is instead constrained to lie in a set with scientific or geometric meaning. Such a structure arises, for example, when regression functions are shape-restricted, coefficient vectors are sparse, matrices are low rank, or graphs obey combinatorial restrictions. In the Bayesian literature, there are at least three broad ways to encode this information. A classical approach is to place a prior directly on the constrained space, as in constrained-support Bayesian analysis \citep{gelfand1992bayesian,dunson2003order,neelon2004isotonic,mohammadi2015bayesian,li2019graphicalhorseshoe,agrell2019gaussian}. A second approach is to relax the constraint by defining prior mass through distance to the target set, thereby replacing a hard support restriction with a continuous neighborhood around the constrained space \citep{duan2020constraint,presman2023distance}. A third approach is to begin with a continuous ambient random variable and map it to the structured set by projection \citep{lin2014bayesian,astfalck2018posterior,xu2023l1ball,xu2024proximal,thompson2024prodag,bhaumik2022two,pal2025projection}. These three strategies all aim to respect structure, but they differ sharply in how they distribute prior mass and in the computational burdens they impose.

Projection differs from the other two approaches in a particularly important way: it does not only confine probability mass in the constrained set, but transports positive mass onto or near the lower-dimensional faces of that set, such as its boundary. This feature is often statistically desirable because the boundary frequently contains the most interpretable or parsimonious objects. For the ball generated by the $\ell_1$-norm, for example, boundary points include vectors with exact zeros, so projection naturally yields sparse representations \citep{duchi2008efficient}. For the nuclear-norm ball, even though the interior contains matrices of all ranks \citep{fazel2001rankminimization}, the boundary contains matrices with some singular values equal to zero, hence suitable for low-rank matrix estimation. For the space of joint probability matrix (one with non-negative entries that sum to one), the boundary contains matrices with some entries equal to zero, hence suitable for estimating a parsimonious transport plan between two marginal distributions \citep{peyre2019computational}. This perspective is reflected in the broader proximal-mapping Bayesian approach that harnesses generalized projection for model building and posterior computation \citep{polson2015proximal,xu2024proximal,zhou2024proximal}.

The above geometric intuition helps explain the growing appeal of projected priors in Bayesian methodology. Projection onto a closed convex set is nonexpansive, so it preserves concentration rather than amplifying uncertainty, and this stability makes projection a natural transformation of a continuous latent variable. In favorable settings, the projection map is differentiable almost surely, which allows one to combine structured priors with gradient-based posterior computation. These ideas have already appeared in several forms in the literature. \citet{lin2014bayesian} used Gaussian-process projection to obtain Bayesian monotone regression under shape constraints.  \citet{xu2023l1ball} introduced the $\ell_1$-ball prior, which generates exact zeros via projection. This projection-based approach (and its variants) to inducing sparsity has been applied in various settings, including sparse vector autoregressive models \citep{hadjamar2024sparsevar}, soft-thresholded Gaussian processes  \citep{kang2018softthresholdgp}. \citet{xu2024proximal} generalized the same philosophy through proximal mappings, covering sparsity, fused structure, low-rank matrices, and other varying-dimensional spaces. \citet{thompson2024prodag} used projection-induced distributions to define priors supported on sparse directed acyclic graphs. Closely related ideas also appear in the broader literature on structured priors, including sparse Bayesian inverse problems \citep{everink2023regularizedgaussian}, image-on-scalar regression \citep{zeng2024imageonscalar}, and changepoint detection for topological image series \citep{thomas2025changepoint}. Together, these developments suggest that projected priors offer a practical route to combining structural regularization, uncertainty quantification, and gradient-guided posterior computation.

Despite these advantages, the practical usefulness of a projected prior depends critically on the cost of computing the projection, which is itself typically an optimization problem. When the projection map admits a closed form or is solvable by a non-iterative algorithm, posterior computation can be relatively straightforward. But in many cases, the projection is available only as the solution of a complicated distance-minimization problem, requiring iterative numerical routines at each evaluation. For example, when the set is formed by more than one constraint, such as the matrix space under both sparsity and positive semidefinite constraints, the projection does not have a non-iterative solution. Instead, the standard solution is to rely on alternating projection \citep{bauschke1996projection} or Dykstra's algorithm \citep{boyle1986method}, which requires multiple iterations to converge. Although convergence is generally rapid for the calculation of one projection, the computational costs add up quickly because the projection may need to be recomputed once or more in each Markov chain Monte Carlo iteration. In addition to computing the projection, numerical differentiation of the projection map can also be burdensome. Automatic differentiation through iterative solution, known as unrolled differentiation, is possible, but comes with a high computational cost unless the number of optimization iterations is small \citep{scieur2022curse}. This challenge is similar to Bayesian variable selection, where exact approaches like spike-and-slab priors yield sparsity but are computationally demanding, leading to the popularity of continuous shrinkage priors as efficient alternatives \citep{park2008bayesian,carvalho2009horseshoe,polson2010shrink}. Likewise, exact projected priors put mass on structured subsets but can be costly to compute. This motivates relaxing exact projection to a continuous prior that strongly favors points near the projection.

\section{Method}
\subsection{Constrained set, projection, and duality}
To motivate our framework, we first consider a model with likelihood function $\mathcal L(y;\theta)$, where $y$ denotes the observed data and $\theta\in \mathbb{R}^p$ is the parameter of interest. Suppose there is a subset $\mathcal M\subset \mathbb{R}^p$, representing the set of parameter values that are desirable for purposes of inference, prediction, or interpretation.

Typically, a constrained prior places positive probability only to points within $\mathcal M$ and zero probability to points outside $\mathcal M$. Here, we consider those constrained priors that assign positive probability for some constraints to become active (binding in the form of equality). To be illustrative, we start with the example of the regression variable selection problem, which ultimately inspires our solution.
In variable selection, the constrained space $\mathcal M$ can be expressed as
$$
\mathcal M = \bigcup_{S\subseteq [p]} \left\{ \theta : \theta_j = 0 \ \forall j \in S, \ \theta_k \neq 0 \ \forall k \in [p]\setminus S \right\},
$$
where $[p]=\{1,2,\ldots,p\}$. The union is $\mathbb{R}^p$ itself, however, positive prior probability is assigned to every subset on the right-hand side of the union, hence $\theta_j=0 \;\forall j \in S$ become active with a positive probability for some set $S$. If one assigns a binomial prior probability for each set, then the resulting prior is the discrete spike-and-slab prior.

For general problems, it is challenging to construct priors on sets with active constraints, but projection gives a viable solution. Suppose the subsets of $\mathcal M$ lie on the boundary of another set $C$. We may first draw $\beta$ from a continuous distribution such as a multivariate Gaussian, and then transform it into $\theta$ as the minimizer of a Euclidean distance:
\[\label{eq:euclidean-projection}
\arg\min_{z\in C} \frac12\| \beta - z\|_2^2 = \arg\min_{z}\frac12\| \beta - z\|_2^2 + I_{C}(z),
\]
which produces a Euclidean projection $\theta\in \mathcal M$ under suitable values of $\beta$, and $I_{C}(\theta)$ is the indicator function that takes value $0$ if $\theta\in C$ and $+\infty$ otherwise. It has been well established that when $C$ is a ball based on the $\ell_1$ norm $\sum_{j=1}^p |\theta_j|$, or some variant such as $\sum_{g=1}^G \sqrt{\sum_{j\in J_g} \theta_j^2}$, the projection does put positive probability mass on the actively constrained subsets of $\mathcal M$.

Generally, a canonical projected prior approach can be formulated as:
\begin{align}
\beta \sim \pi^\beta_0(\cdot), \quad \theta = T(\beta),
\label{eq:push}
\end{align}
where $\pi^\beta_0$ is a base prior distribution on the auxiliary variable $\beta$ and $T(\beta)$ is the projection of $\beta$ onto $C$. The transformation $T$ serves as a change of variables that produces a pushed-forward measure $T_{\#}\pi^\beta_0$ on $\theta$, which we refer to as the \emph{projected prior}. \eqref{eq:push} defines a distribution over the parameter of interest $\theta$ by deterministically transforming variable $\beta$, a flexible approach for constructing distributions without requiring an explicit density. Specifically, we take the transformation $T(\beta)$ to be the solution to an optimization problem parameterized by $\beta$. The Euclidean projection \eqref{eq:euclidean-projection} is a special case, and we discuss other forms of $T$ later.

The challenge in adopting the projection prior is that the projection $T$ may not be a tractable optimization solution, meaning that it is not solvable in closed form or by a non-iterative algorithm.
Therefore, we first consider the dual form of the projection and quantify the exact projection via the dual gap.
Since $I_{C}(z) =  \sup_{u \in \mathbb{R}^p} \big( u^\top z - \sigma_{C}(u) \big)$ with $\sigma_{C}(u) = \sup_{w\in C} u^\top w$ known as the support function of $C$,
we have the lower bound of projection loss using the max-min inequality,
\[\label{eq:projection_duality}
  & \frac12\| \beta - z\|_2^2 + I_{C}(z) \ge \min_{z \in \mathbb{R}^p}\frac12\| \beta - z\|_2^2 + I_{C}(z)\ge
  \sup_{u \in \mathbb{R}^p}\inf_{z \in \mathbb{R}^p}\frac12\| \beta - z\|_2^2 + u^\top z - \sigma_{C}(u) \\
  &\ge
  \beta^\top u - \frac12 \|u\|^2 - \sigma_{C}(u).
\]
The rightmost of the inequality is known as the dual function of $u\in\mathbb{R}^p$, given $\beta$; accordingly, the leftmost is known as the primal function of $z\in\mathbb{R}^p$, given $\beta$.

The above is commonly known as weak duality, and it holds for any pair of values $(z,u)\in \mathbb{R}^p\times \mathbb{R}^p$, regardless of whether $C$ is convex. Suppose we do not have a tractable solution for the projection problem, but somehow encounter a pair $(\hat z,\hat u)$ such that $(1/2)\| \beta - \hat z\|^2 + I_{C}(\hat z) = \beta^\top \hat u -(1/2) \|\hat u\|^2 - \sigma_{C}(\hat u)$ exactly and the common value is finite. Then every inequality must turn into an equality, and hence $\hat z$ is the exact projection. The difference between the primal and dual functions, or the \emph{duality gap}, characterizes the closeness of $\hat z$ to the projection.

In \eqref{eq:projection_duality}, even though $\hat u$ may seem redundant for our purpose, it computes the dual function as the lower bound and  {\em certifies} the optimality of $\hat z$. In optimization, there are algorithms, namely primal-dual methods, that do not operate directly by minimizing only the primal function, but instead simultaneously update $z$ and $u$ to reduce the duality gap. This motivates us to take a similar approach in Bayesian prior specification.

\subsection{Gap-shrinkage prior}
We consider a general projection problem of the form
\(
\theta = T(\beta)= {\arg\min}_{z\in C}\ f(z;\beta)
\)
where $\beta\sim \pi^\beta_0(\cdot)$, and we assume the minimizer is unique for almost every $\beta$. The family of transformations $T$ include Euclidean projections, proximal mappings, and Bregman projections. We call $f$ the primal loss, and assume it has an associated minorant $\phi$ and a dual function $d$:
\(
f(z;\beta) = \sup_{u\in\mathcal U} \phi(z,u;\beta), \quad z\in C, \qquad
d(u;\beta) = \inf_{z\in C} \phi(z,u;\beta), \quad u\in \mathcal{U},
\)
in which taking the infimum often yields a dependent relationship among $u$, $z$, and $\beta$.

The choice of $\phi$ is not unique, and depends on the optimization technique being used, such as variable splitting, Lagrange multipliers, or Fenchel conjugacy. We discuss the choice of $\phi$ in more detail in a later section. Regardless, a practical goal is to make $d$ tractable, which in turn yields a tractable gap function
$f(z;\beta)-d(u;\beta),
$
as an upper bound on $f(z;\beta)-\min_{z\in C} f(z;\beta)$. Exponentiating the negative of the gap function, we obtain a shrinkage prior:
\[\label{eq:primal-dual-prior}
  \Pi_0(\theta,u,\beta) & \propto  {\exp\left [ - \alpha \left\{ f(\theta;\beta) - d(u;\beta) \right\} \right ] } \pi^\beta_0(\beta),
  \]
  where $\theta\in C$ and $u\in \mathcal U$. We refer to \eqref{eq:primal-dual-prior} as the \emph{gap-shrinkage prior}, and $\alpha>0$ is a fixed hyperparameter that controls the strength of the shrinkage effect. Since the duality gap is always non-negative, given $(u, \beta)$, the gap-shrinkage prior places high density close to the exact projection where the gap equals zero.

  \begin{remark}
    Note that $(u,z)$ can be chosen to be independent or dependent in the prior specification, and it affects neither the weak duality nor the certification of the optimality of $\hat z$. In several cases presented in the article, we may choose $u$ as some deterministic transform of $\beta$ and $z$.
  \end{remark}
 \begin{figure}[H]
    \centering
    \includegraphics[width=0.45\textwidth]{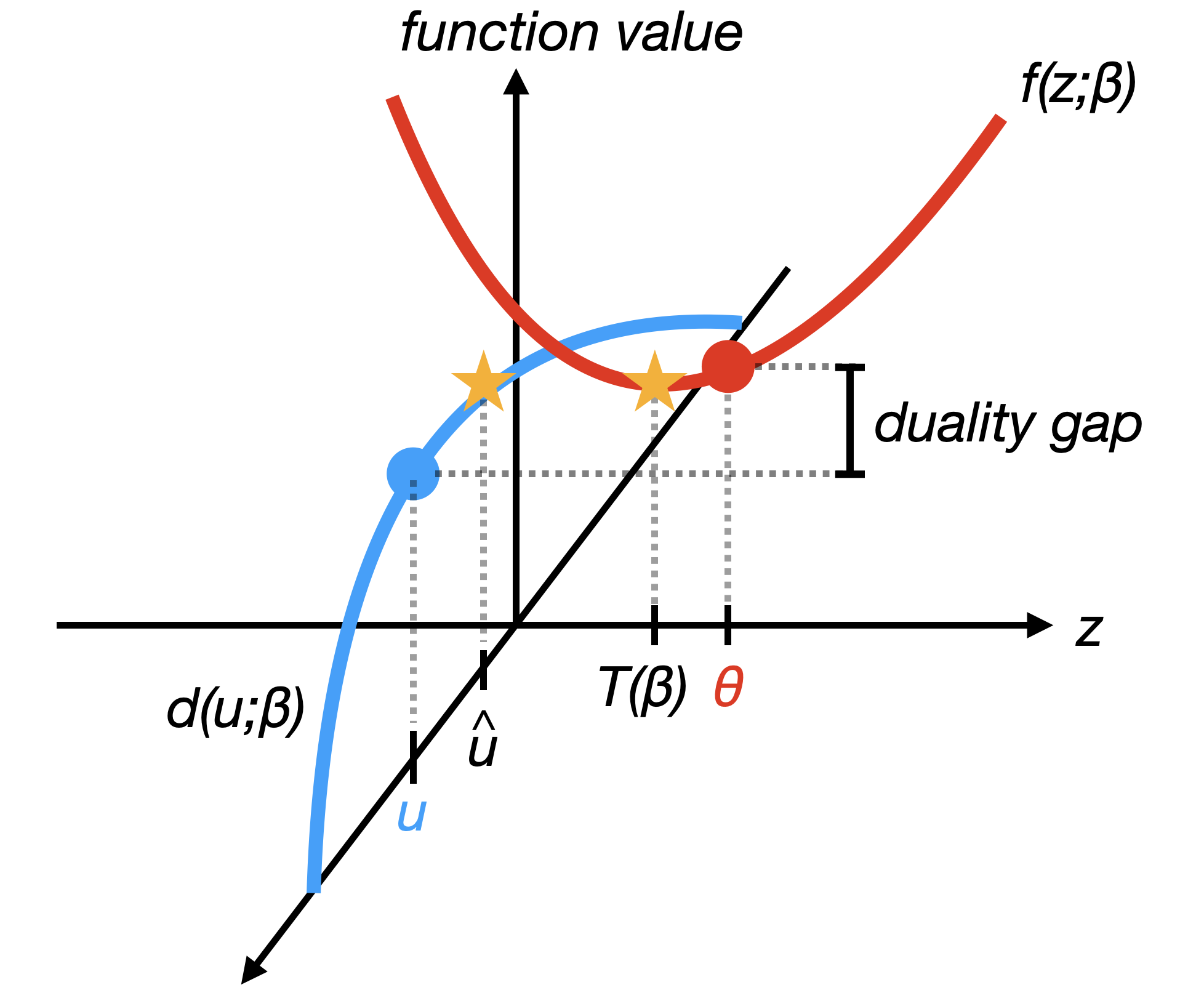}
    \caption{Illustration of gap-shrinkage prior. Let $f$ be the projection loss at a given $\beta$. The minimizer $T(\beta)$ (star on red curve) lives in a set of interest $C$, but is challenging to find. We use an augmented $u$ in the dual function $d(u;\beta)$, which has its maximum (star on blue curve) equals the minimum of $f(z;\beta)$. Since both $f$ and $d$ are tractable, a prior penalizing their gap makes $\theta$ close to $T(\beta)$.}
  \label{fig:duality_gap_illu}
\end{figure}
Beyond the Euclidean projection \eqref{eq:euclidean-projection}, we focus on a generalized projection known as a proximal mapping, which is the minimizer of the following proximal loss:
\[\label{eq:proximal-loss}
f(z;\beta) = \frac12\| \beta - z\|_2^2 + g(z),
\]
where $g$ is a convex and lower semi-continuous function. The minimizer $\arg\min_z f(z;\beta)$ is the proximal mapping, denoted by $\text{prox}_g(\beta)$. The above loss includes the Euclidean projection loss as a special case when $g(z)=I_C(z)$, although more often one uses a penalty form of $g$ instead of $I_C(z)$. Specifically, for Euclidean projection onto the set $C=\{z: h_j(z)\le r_j, j=1,\ldots,d\}$, where each $h_j:\mathbb{R}^p\mapsto \mathbb{R}$ is convex, one can show that for every $(r_1,\ldots,r_d)$ the Euclidean projection is also the minimizer of a proximal loss with penalty function $g(z)=\sum \lambda_j h_j(z)$ for suitable $\lambda_j\ge 0$. The preference for the penalty form over $I_C(z)$ is often driven by its greater computational simplicity. We first illustrate the gap-shrinkage prior with a simple $\ell_1$-norm proximal mapping.
\begin{example}
  Consider the $\ell_1$-norm proximal loss commonly used for inducing sparsity:
 \(
f(z;\beta,\lambda)= \frac12\|z-\beta\|_2^2+\lambda\|z\|_1,
 \)
 with $\lambda\ge 0$. In fact,
 the primal form has a closed-form minimizer  $\hat z = \text{sign}(\beta) (|\beta| - \lambda )_+$, which was used in the construction of $\ell_1$-ball prior \citep{xu2023l1ball} and soft-thresholded Gaussian process prior \citep{kang2018softthresholdgp}. Nevertheless, interesting continuous priors can be derived via the gap-shrinkage route. Using the dual norm of $\ell_1$-norm, we can see $\lambda\|z\|_1=\max_{\|u\|_\infty\le \lambda}u^\top z$, leading to minorant $\phi(z,u;\beta) = (1/2)\|z-\beta\|_2^2+ u^\top z$. Minimizing over $z$ yields a deterministic relationship $\hat z =\beta-u$, and the dual function:
\(
  d(u;\beta,\lambda)=  u^\top\beta-\frac12\|u\|^2, \ {\|u\|_\infty\le \lambda}.
\)
Setting $\theta=\hat z$ as the minimizer and applying $\beta=\theta+u$ to simplify, we have the gap function:
\(
f(\hat{z};\beta)-d(u;\beta) = \lambda\|\theta\|_1 - u^\top \theta.
\)
 We note that each element in the gap $\lambda|\theta_j|-u_j\theta_j$ is smaller when $u_j\theta_j>0$ than when $u_j\theta_j<0$; hence we can further impose constraint $u_j\theta_j>0$ for all $j=1,\ldots,p$.
Assigning a Cauchy kernel as the base prior $\pi^\beta_0(\beta)$, we have the following simplified gap-shrinkage prior:
\(
\Pi_0(\theta,u \mid \lambda) & \propto  {\exp\left\{ -
\alpha\sum_{j=1}^p (\lambda-|u_j|)|\theta_j|
\right\}}   1(\|u\|_\infty\le \lambda)   \prod_{j=1}^p  \{1+  (\theta_j+u_j)^2 \} ^{-1}.
\)
We can see resemblance to the Bayesian lasso prior $\propto\exp(- \alpha \lambda \|\theta\|_1)$, but also a clear difference in the $\alpha |u_j||\theta_j|$ term.  Effectively, a large $\alpha$ makes the following happen with high probability:
\(
  (\lambda-|u_j|)|\theta_j| \approx 0 \quad \Leftrightarrow \quad \theta_j\approx 0 \text{ or } \lambda \approx |u_j|.
\)
There is a connection to the global-local shrinkage literature \citep{polson2010shrink}: $\lambda \gg |u_j|$ leads to a strong global shrinkage effect driving $\theta_j$ toward the zero (active constraint), while the dual parameter $u_j$ provides a local adjustment allowing $\theta_j$ to escape from zero as $|u_j|\uparrow \lambda$.
\end{example}
\begin{remark}
  In the supplementary material, we prove that under Cauchy kernel for $\pi^\beta_0$, the marginal prior $\pi_0(\theta_j | \lambda)$ has a power-law tail, instead of exponential as in the Bayesian lasso.
\end{remark}

For generalized $\ell_1$-norm proximal losses of the form $g(z) = \lambda \|Dz\|_1$,
the primal loss function does not have a tractable solution. Yet such losses have important applications in graph-based smoothing, change-point detection, and other structured sparsity problems. For example, if a list of contrasts $(i,j)\in E$ is represented as edges in a graph $G=([p],E)$, then $D$ can be taken as the corresponding contrast design matrix with values in $\{-1,0,1\}^{|E|\times p}$. We now derive the associated gap-shrinkage prior.

\begin{example}
  Consider the proximal loss
  $
    f(z;\beta,\lambda)=\frac12\|z-\beta\|_2^2+\lambda\|Dz\|_1,
  $
  for $D\in \mathbb{R}^{d\times p}$.
  Using a variable splitting technique $Dz=w$ and Lagrange multiplier $u\in \mathbb{R}^d$, we have
\(\phi(z,w,u;\beta) = \frac12\|z-\beta\|_2^2+\lambda\|w\|_1 + u^\top (D z-w).\)
 Minimizing over $z$ and $w$ yields relationship $\theta=\hat z =\beta - D^\top u$ and $\min_w \lambda \|w\|_1 - u^\top w=0$ if $\|u\|_\infty \le \lambda$ (otherwise, the infimum is $-\infty$), hence the dual function is:
  \(
  & d(u;\beta,\lambda)=-\tfrac12\|D^\top u\|_2^2+\beta^\top D^\top u, \quad \|u\|_\infty\le\lambda.
  \)
  Using a similar constraint $u_j (D\theta)_j>0$ and assigning a Cauchy kernel $\pi^\beta_0$, replacing $z$ in primal loss function with the minimum $\theta$, we have a gap-shrinkage prior:
\(
\Pi_0(\theta,u \mid \lambda) & \propto {\exp\left\{ - \alpha   \sum_{j=1}^d (\lambda - |u_j|) |(D\theta)_j| \right\}}   1(\|u\|_\infty\le \lambda)   \prod_{j=1}^p  \{1+  (\theta_j+ D_j^\top u)^2 \} ^{-1}.
\)
Note a similar adaptive shrinkage effect on $ D\theta$ as in the last example.
\end{example}

\subsection{Properties of gap-shrinkage}
We now derive some useful properties of gap-shrinkage priors. We first generalize the examples above to broad proximal mapping projection \eqref{eq:proximal-loss}.

\medskip\noindent\textit{Tractable relaxation of proximal mapping projection.}\quad
Our goal is to derive a tractable form of the gap-shrinkage prior for proximal mapping \eqref{eq:proximal-loss}, so that readers can readily apply it to their choice of $g$. Using variable splitting $z=w$, we have the Lagrangian for \eqref{eq:proximal-loss}:
\(
  \phi(z,w,u;\beta) = \frac12\|\beta - z\|^2 + g(w) + u^\top (z - w),
\)
whose minimization over $(z,w)$ yields $z = \beta - u$, $u\in \mathbb{R}^p$, and the dual function:
$
  d(u;\beta) = u^\top \beta - \frac12 \|u\|^2 -g^*(u),
$
where $g^*(u) = \sup_{w} \big\{u^\top w - g(w)\big\}$ is known as the Fenchel conjugate of $g$.

Therefore, we have the gap function under $z=\beta-u$:
\[\label{eq:gap-function-proximal}
  f(z;\beta)-d(u;\beta)& = \frac12\|\beta - z\|^2 + g(z) -  u^\top \beta + \frac12 \|u\|^2 +g^*(u)\\
  & =    g^*(u)+ g(z) -   u^\top z  \big\vert_{z=\beta-u}.
\]
We already know the non-negativity of the left hand side. In the meantime, the non-negativity of the right hand side is also famously known as the Fenchel--Young inequality, and trivially provable using the definition $g^*(u)$ as a supremum. As a result, the gap-shrinkage is
\(
  \Pi_0(\theta,u\mid \lambda) & \propto  {\exp\left [ - \alpha \left\{ g^*(u)+ g(\theta)-   u^\top \theta \right\} \right ] } \pi^\beta_0(\theta+u),
\)
where we replace $z$ in the primal function with the minimizer $\theta$ and apply $\theta = z = \beta-u$ (i.e., $\beta = \theta + u$) in $\pi^\beta_0$. In practice, many Fenchel conjugates admit closed-form expressions. Here are a few common examples. For $g(z) = \|z\|$ with respect to some norm $\|\cdot\|$, the conjugate is $g^*(u) = I_{\{ v : \|v\|_* \le 1 \}}(u)$, with $\|\cdot\|_*$ denoting the dual norm. When $g(z) = \frac{1}{2} z^\top Q z$ for a positive definite $Q$, the conjugate becomes $g^*(u) = \frac{1}{2} u^\top Q^{-1} u$. We note that the gap function in Examples 1 and 2 can be derived from \eqref{eq:gap-function-proximal} as well.

For more insights, the low-dimensional set $(\theta, u) \in \mathbb{R}^{p}\times \mathbb{R}^{p}$ has the following equivalence:
\[\label{eq:zero-gap-condition}
  g(\theta)+g^*(u)-\theta^\top u=0 \iff u \in \partial g(\theta) \iff \theta \in \partial g^*(u),
\]
where $\partial g(\theta)$ denotes subdifferential, i.e., the collection of all subgradients of $g$ at $\theta$. When the function is differentiable, subdifferential reduces to a single gradient element. If we view $\theta$ and $u$ as {\em two sides of the same coin}, then \eqref{eq:zero-gap-condition} tells us how to flip the coin to get the other side. We again use $g(\theta)=\lambda\|\theta\|_1$ to illustrate: $\partial g(\theta)=\{u: u_i= \lambda\text{sign}(\theta_i), \text{ if } \theta_i\neq 0, u_i\in [-\lambda,\lambda] \text{ if } \theta_i= 0\}$, $g^*(u)=I_{ \{w:\|w\|_\infty\le \lambda\}}(u)$ has $\partial g^*(u)=\{ \theta: \theta_i=0 \text{ if } |u_i|<\lambda, \text{sign}(\theta_i)= \text{sign}(u_i) \text{ if } |u_i|=\lambda\}$.

Accordingly, for exact $\theta=T(\beta)$, the coupling between $\beta$ and $\theta$ gives an inverse-prox set
\(
\mathbb M = \{(\beta,\theta)\in \mathbb{R}^p \times \text{dom}(g): \beta =\theta + u, u\in \partial g(\theta)\}.
\)
The above further reduces to $\beta =\theta + \nabla g(\theta)$ (the inverse of proximal mapping), if $g$ is differentiable at $\theta$.
Therefore, we see the gap-shrinkage $\Pi_0(\beta,\theta)$ is a continuous prior in $\mathbb{R}^p \times \text{dom}(g)$, with mass concentrated near the $\mathbb M$. Figure \ref{fig:manifold-example} shows the geometric intuition.

\begin{figure}[H]
  \begin{subfigure}[t]{0.45\textwidth}
  	\centering
    \includegraphics[width=0.7\textwidth]{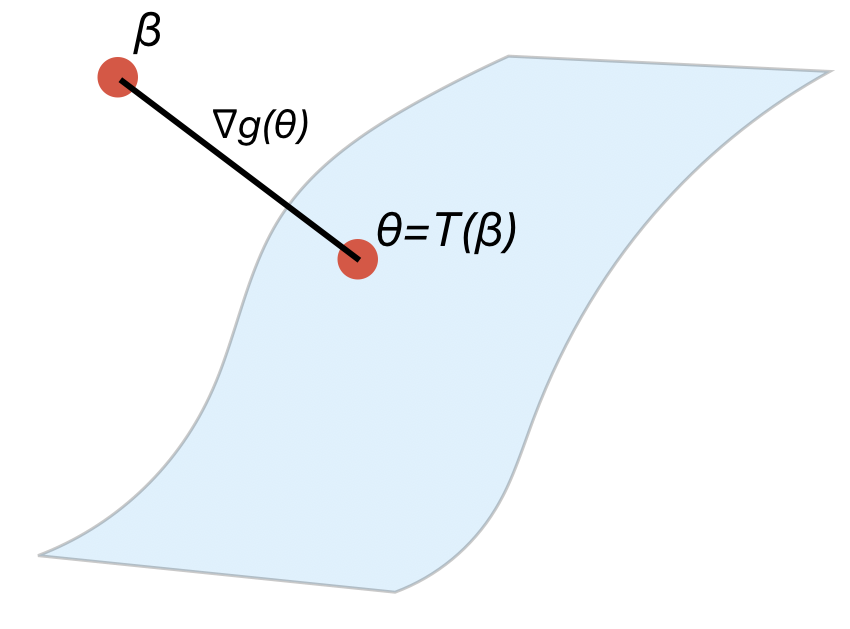}
    \caption{Illustration of the inverse-prox set $\mathbb M$ for $(\beta,\theta)$, $\beta=\theta + \nabla g(\theta)$ for a differentiable $g$. If $g$ is not differentiable at $\theta$, $\nabla g(\theta)$ is replaced with subgradient.}
  \end{subfigure} \qquad
  \begin{subfigure}[t]{0.45\textwidth}
    	\centering
    \includegraphics[width=0.7\textwidth]{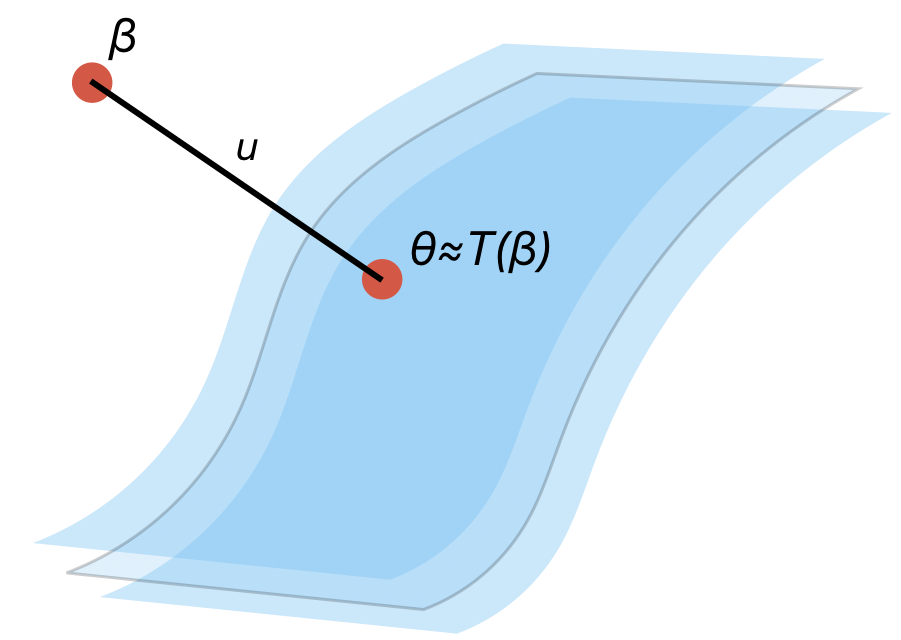}
    \caption{Gap-shrinkage prior relaxes the coupling to $\beta=\theta + u$, so that $\theta$ no longer resides on the low-dimensional set, but $\theta \approx T(\beta)$.}
  \end{subfigure}
  \caption{Illustration of the inverse-prox set for $(\beta,\theta)$ under proximal mapping, and the relaxation under gap-shrinkage prior. In the latter, $\theta \approx T(\beta)$ thanks to the distance bound via the gap function.}
  \label{fig:manifold-example}
  \end{figure}
\noindent\textit{Distance bound via the gap function.}\quad Next, we establish the effectiveness of shrinkage. Specifically, since $T(\beta)$ is intractable and we can only quantify $f(z;\beta)-d(u;\beta)$, how far is the relaxed $z$ from the exact $T(\beta)$? We can now use the following theorem to quantify the difference.

\begin{theorem}\label{thm:shrinkage-effect}
  If $f(z;\beta)$ is $\mu$-strongly convex, i.e., $f(z;\beta)- (\mu/2)\|z\|_2^2$ is convex,  its unique minimizer $\hat z=\arg\min_z f(z;\beta)$,
  then for any $z\in C$ and $u\in \mathcal U$,
  \(\|z - \hat z\|_2 \;\le\; \sqrt{({2}/{\mu}) \left\{ f(z;\beta)-d(u;\beta) \right\}}.\)
\end{theorem}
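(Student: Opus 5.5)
The proof will combine two facts. First, the strong convexity of $f(\cdot;\beta)$ gives a quadratic growth bound around the minimizer $\hat z$. Second, weak duality lets us replace the unknown optimal value $f(\hat z;\beta)$ by the tractable lower bound $d(u;\beta)$. The constraint $z\in C$ is handled by viewing $f$ as extended-valued, with the indicator $I_C$ absorbed into it as in \eqref{eq:euclidean-projection}. The minimizer is then taken over $C$, and strong convexity is understood for this extended function.

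\textbf{Step 1 (quadratic growth).} I will show that for every $z$,
\[
f(z;\beta)\ge f(\hat z;\beta)+\frac{\mu}{2}\|z-\hat z\|_2^2 .
\]
Write $h(z)=f(z;\beta)-(\mu/2)\|z\|_2^2$, which is convex by assumption. Since $\hat z$ minimizes $f$, we have $0\in\partial f(\hat z;\beta)=\partial h(\hat z)+\mu\hat z$. The sum rule applies here because the quadratic term is finite and differentiable everywhere. So $-\mu\hat z\in\partial h(\hat z)$, and the subgradient inequality gives
\[
h(z)\ge h(\hat z)-\mu\hat z^\top(z-\hat z).
\]
Adding back $(\mu/2)\|z\|^2$ and $-(\mu/2)\|\hat z\|^2$ to the two sides and expanding turns the right-hand side into $f(\hat z;\beta)+(\mu/2)\|z-\hat z\|^2$.

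If I want to avoid subdifferential calculus, I can argue along the segment instead. For $t\in(0,1)$, strong convexity along the segment from $\hat z$ to $z$ gives
\[
f(\hat z+t(z-\hat z))\le (1-t)f(\hat z)+tf(z)-\frac{\mu}{2}t(1-t)\|z-\hat z\|^2 .
\]
The left-hand side is at least $f(\hat z)$. Dividing by $t$ and letting $t\downarrow0$ yields the same inequality.

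\textbf{Step 2 (weak duality).} From the definitions,
\[
d(u;\beta)=\inf_{z'\in C}\phi(z',u;\beta)\le \phi(\hat z,u;\beta)\le \sup_{u'}\phi(\hat z,u';\beta)=f(\hat z;\beta),
\]
so $d(u;\beta)\le f(\hat z;\beta)$ for every $u\in\mathcal U$. This is the max-min inequality already used in \eqref{eq:projection_duality}.

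\textbf{Step 3 (combine).} Steps 1 and 2 together give
\[
\frac{\mu}{2}\|z-\hat z\|^2\le f(z;\beta)-f(\hat z;\beta)\le f(z;\beta)-d(u;\beta).
\]
Rearranging and taking square roots gives the claim. The gap $f(z;\beta)-d(u;\beta)$ is nonnegative, so the square root is well defined.

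The only delicate point is Step 1 when $f$ is extended-valued or nonsmooth. There I would rely on the segment argument, which needs only convexity of the domain and the definition of strong convexity, and no differentiability.
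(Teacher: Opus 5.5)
Your proof is correct and takes essentially the route the paper has in mind; the paper gives no written proof, only the remark that the result is ``a direct application of strong convexity,'' which your argument spells out as quadratic growth of $f(\cdot;\beta)$ around $\hat z$ combined with the weak-duality bound $d(u;\beta)\le f(\hat z;\beta)$. Absorbing $I_C$ into $f$, so that $\hat z$ is the minimizer over $C$, is the reading Step 2 requires: Step 2 uses $\hat z\in C$, which would fail for an unconstrained minimizer of $f$.
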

\noindent The proof is based on a direct application of strong convexity.
The theorem reveals that encouraging a small duality gap effectively bounds the deviation of the relaxed $z$ from the exact $T(\beta)$. We can immediately see that proximal mapping (including Euclidean projection to convex set as a special case) with a convex $g$ is $1$-strongly convex, hence
$\|z - T(\beta)\|_2 \;\le\;  \sqrt{2 \{ f(z;\beta)-d(u;\beta)\}}.$

\medskip\noindent\textit{Generalizability to broad projection.}\quad
To obtain more general results applicable to a broad class of projections, we introduce the Bregman projection, which minimizes the \textit{Bregman divergence}:
\[
\label{eq:bregman-divergence}
D_\psi(z,\beta) = \psi(z) -\psi(\beta) - ( z-\beta)^\top \nabla \psi(\beta),
\]
with $\psi:\mathbb{R}^p\to \mathbb{R} \cup \{+\infty\}$ an extended-value function that is continuously differentiable and strictly convex. We can obtain Bregman projection as $T(\beta)= {\arg\min}_{z\in \mathbb{R}^p} D_\psi(z,\beta) + I_C(z).$ The Bregman divergence $D_\psi$ is a generalization of the squared Euclidean distance (when $\psi(z)=\|z\|_2^2$). The Bregman divergence $D_\psi$ may be asymmetric and need not satisfy the triangle inequality, but it provides a natural notion of discrepancy tailored to the geometry induced by $\psi$. We note that strong convexity is not necessary if we want to have effective control on the Bregman divergence between $z$ and $T(\beta)$, as shown in the following theorem.
\begin{theorem}\label{thm:bregman-shrinkage-effect}
  If $D_\psi$ is a Bregman divergence generated by a continuously differentiable, strictly convex function $\psi:\mathbb{R}^p \to \mathbb{R} \cup \{+\infty\}$, then for the primal $f(z;\beta)=D_\psi(z,\beta)+I_C(z)$, the dual function is
$
d(u;\beta)
=
-\psi^*(\nabla\psi(\beta)-u)-\sigma_C(u)
+\beta^\top \nabla\psi(\beta)-\psi(\beta),
 $
with $u\in \mathbb{R}^p$.
Let the unique minimizer be $\hat z=\arg\min_z D_\psi(z,\beta)+ I_C(z)$. Then, for any $z\in C$ and $u\in \mathbb{R}^p$,
  \(D_\psi(z,\hat z) \;\le\;   f(z;\beta)-d(u;\beta) .\)
\end{theorem}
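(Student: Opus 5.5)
The proof has two parts: first derive the dual function by Fenchel conjugacy, then obtain the divergence bound by splitting the duality gap into a suboptimality term plus a nonnegative remainder.

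\textbf{Dual function.} I would reuse the construction from the proximal case with the variable splitting $z=w$. Dualize the indicator through its support function, $I_C(z)=\sup_u\{u^\top z-\sigma_C(u)\}$, which gives the minorant
\[
\phi(z,u;\beta)=\psi(z)-\psi(\beta)-(z-\beta)^\top\nabla\psi(\beta)+u^\top z-\sigma_C(u).
\]
Minimizing over $z\in\mathbb R^p$ collects the linear terms into $-z^\top(\nabla\psi(\beta)-u)$. Hence $\inf_z\{\psi(z)-z^\top(\nabla\psi(\beta)-u)\}=-\psi^*(\nabla\psi(\beta)-u)$. The remaining constants are $-\psi(\beta)+\beta^\top\nabla\psi(\beta)$, which yields the stated $d(u;\beta)$. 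Weak duality $f(z;\beta)\ge d(u;\beta)$ for $z\in C$ then follows exactly as in \eqref{eq:projection_duality}.

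\textbf{Gap decomposition.} Write
\[
f(z;\beta)-d(u;\beta)=\{f(z;\beta)-f(\hat z;\beta)\}+\{f(\hat z;\beta)-d(u;\beta)\}.
\]
The second bracket is nonnegative because $f(\hat z;\beta)=\min_{z\in C} f(z;\beta)\ge d(u;\beta)$ by weak duality. It therefore suffices to prove the Bregman "three-point" inequality
\[
D_\psi(z,\beta)-D_\psi(\hat z,\beta)\ge D_\psi(z,\hat z)\quad\text{for all } z\in C.
\]
This plays the role that strong convexity played in Theorem~\ref{thm:shrinkage-effect}.

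\textbf{Three-point inequality.} Direct expansion gives the identity
\[
D_\psi(z,\beta)-D_\psi(\hat z,\beta)-D_\psi(z,\hat z)=(z-\hat z)^\top\{\nabla\psi(\hat z)-\nabla\psi(\beta)\}.
\]
The right-hand side is nonnegative by the first-order optimality condition for $\hat z$. That is, $\nabla_z D_\psi(\hat z,\beta)=\nabla\psi(\hat z)-\nabla\psi(\beta)$ lies in $-N_C(\hat z)$, so its inner product with $z-\hat z$ is $\ge 0$ for every $z\in C$.

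\textbf{Main obstacle.} The step that needs care is justifying this variational inequality. Convexity of $C$ is implicit here, since $I_C$ is dualized via $\sigma_C$ and a unique minimizer is assumed. The optimality condition also needs $\hat z$ to lie where $\psi$ is differentiable; I would assume $\hat z\in\operatorname{int}\operatorname{dom}\psi$, as is standard for Legendre-type $\psi$. Under these conditions, the condition follows from convexity of $t\mapsto D_\psi(\hat z+t(z-\hat z),\beta)$ on $[0,1]$. Minimality at $t=0$ forces a nonnegative right derivative, which is exactly the required inner product. Strict convexity is used only for uniqueness of $\hat z$, so no strong convexity enters.
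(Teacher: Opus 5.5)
Your proposal is correct and follows the paper's argument. You derive $d$ by conjugate duality, then combine the generalized Pythagorean inequality $D_\psi(z,\beta)\ge D_\psi(z,\hat z)+D_\psi(\hat z,\beta)$ for $z\in C$ with weak duality $d(u;\beta)\le f(\hat z;\beta)$. The differences are cosmetic: you dualize $I_C$ through $\sigma_C$, whereas the paper dualizes $\psi(z)-z^\top\nabla\psi(\beta)$ through its biconjugate and swaps $\inf$ and $\sup$, and both routes give the same $d$. You also prove the Pythagorean inequality via the three-point identity and first-order optimality, which makes explicit the convexity of $C$ and the differentiability at $\hat z$ that it needs, where the paper simply cites it.
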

\noindent The theorem  applies to $\psi$ with domain $\mathbb{R}^p$. For those $D_{\tilde\psi}$ generated by $\tilde\psi$ with domain $\Omega \subseteq \mathbb{R}^p$, one could use $\psi(z)=\tilde\psi(z) + I_{\Omega}(z)$ to reconcile the domain difference, provided the divergence is calculated for points where the gradient exists.
We now present an example.
\begin{example}
  Consider negative entropy
  $
  \psi(z)=\sum_{j=1}^p z_j\log z_j + I_{\Delta^{p-1}}(z)$ for $z\in \mathbb{R}^p$, whose Bregman divergence is KL divergence
  $
  D_\psi(z,\beta)
  =
  \sum_{j=1}^p z_j\log({z_j}/{\beta_j})
  $
  Let $C_0\subset\mathbb{R}^p$ be a convex set and let $\Delta^{p-1}=\{z: z\ge 0, \sum_{j=1}^p z_j = 1\}$ denote the probability simplex (which is convex). We have a Bregman projection from $\beta \in \Delta^{p-1}$ to a subset within:
  $
  T(\beta)={\arg\min}_{z\in C_0\cap\Delta^{p-1}} \sum_{j=1}^p z_j\log({z_j}/{\beta_j}).
  $
  Compared to the Euclidean projection that could produce $z_j=0$, the above preserves positivity, which is often more natural for probability parameters.
  Since for $\psi(z)=\sum_{j=1}^p z_j\log z_j + I_{\Delta^{p-1}}(z)$,
the Fenchel conjugate is
$
\psi^*(u)=\log\Big(\sum_{j=1}^p e^{u_j}\Big).
$
We derive the dual function explicitly:
$
d(u;\beta)
=
-\log\Big(\sum_{j=1}^p \beta_j e^{-u_j}\Big)-\sigma_{C_0}(u),
$
where $\sigma_{C_0}(u)=\sup_{z\in C_0}\langle u,z\rangle$ is the support function of $C_0$. Therefore, the gap function takes the form
$
\sum_{j=1}^p z_j\log\frac{z_j}{\beta_j}
+\log\Big(\sum_{j=1}^p \beta_j e^{-u_j}\Big) + \sigma_{C_0}(u).
$
\end{example}

\subsection{Variational
gap function for complicated settings}
\label{sec:variational}
In complex settings, the gap function may be either not fully tractable or computationally expensive to evaluate. We now handle these cases by introducing a variational gap function:
\(
  \tilde G (z,v;\beta) \ge    f(z;\beta) - d(u;\beta), \quad \text{for all } z,u,v,
\)
and $\exists v=v(z,u,\beta)$ such that the equality can be achieved for any $(z,u)$. Therefore, we see the variational gap function is a tight upper bound of the true gap function, hence allowing us to use $\exp[-\alpha \tilde G (\theta,v;\beta)]$ for gap-shrinkage.

To be more concrete, we now discuss two applications of the variational gap function. In the first case, in proximal mappings, it is common to have an additive $g(z)=\sum_{j=1}^J g_j(z)$, such as a sum of different regularizers or cost functions, where each $g_j$ has a tractable Fenchel conjugate. We can use the Fenchel conjugate for the additive functions given by infimum convolution:
\begin{align}
  g(z)=\sum_{j=1}^J g_j(z) \quad \text{then} \quad g^*(u)=\inf_{\substack{ \sum_{j=1}^J v_j=u}}\sum_{j=1}^J g_j^*(v_j).
  \label{eq:variational}
\end{align}
The infimum convolution may not have a closed-form solution, motivating for a variational gap:
\begin{align}
  \tilde G (z,v;\beta)= \sum_{j=1}^J g_j^*(v_j)+ g(z)-   \sum_{j=1}^J v_j^\top z \big\vert_{z=\beta - \sum_{j=1}^J v_j},
\end{align}
corresponding to $v=(v_1,\dots,v_J)$, each $v_j\in \mathbb{R}^p$; $\tilde G (z,v;\beta)$ is a reachable upper bound of the  gap because $\sum_{j=1}^J g_j^*(v_j) \geq g^*(u)$ by \eqref{eq:variational}, which can then be plugged into the gap function in \eqref{eq:gap-function-proximal}.

\begin{example}
  Consider the projection to a set $C=\bigcap_{j=1}^J C_j$, with each $C_j$ a convex set. Using $g_j(z)=I_{C_j}(z)$, we can see that $g(z)=\sum_{j=1}^J I_{C_j}(z)$ is equivalent to $I_C(z)$. Therefore, the additive form of $g$ corresponds to the projection to the intersection of the sets.
  Since $I_{C_j}(z)$ has Fenchel conjugate $\sigma_{C_j}(v_j) $, the support function of $C_j$, we have the gap-shrinkage prior given by
  \(
  \Pi_0(\theta,v_1,\ldots,v_J\mid \lambda) & \propto  {\exp\left [ - \alpha \sum_{j=1}^J \left\{  \sigma_{C_j}(v_j) -   v_j^\top \theta \right\} \right ] } \pi^\beta_0(\theta+ \sum_{j=1}^J v_j) 1(\theta\in C_j).
  \)
  For simple convex sets $C_j$, the support function $\sigma_{C_j}(v_j)$ has a closed-form expression. For example, for a ball $C_j=\{z: \|z\|\le r\}$, generated by some norm $\|\cdot\|$, we have $\sigma_{C_j}(v_j) = r\|v_j\|_*$.

  For a concrete application, for group-wise sparsity, we may consider a projected prior generated by the intersection of the sets $C_j=\{z: \sqrt{\sum_{i\in G_j} z^2_i} \le r_j\}$, where $G_j$ is a group of indices that we want the elements of $\theta_j$ to be simultaneously zero or non-zero. The associated support function is $\sigma_{C_j}(u_j) = r_j \sqrt{\sum_{i\in G_j} u^2_{j,i}}$, since the dual norm of the $\ell_2$ norm is another $\ell_2$ norm.
\end{example}

The second application involves a matrix-valued parameter $\theta\in \mathbb{R}^{p_1\times p_2}$, whose operations may be expensive. For example, for projection to a low-rank space, one often needs to apply some matrix decomposition such as singular value decomposition (SVD), which is of cubic order $O(p_1 p_2 \min(p_1,p_2))$. This motivates us to develop a variational gap function.
\begin{example}
  Let $\theta\in\mathbb{R}^{p_1\times p_2}$ be a matrix parameter and
  $g(\theta)=\lambda\|\theta\|_* $, where $\|\cdot\|_*$ denotes the nuclear norm, the sum of the singular values of $\theta$.   The Fenchel conjugate of $g$ is the indicator of the operator-norm ball,
  $g^*(u)=I_{\{\|u\|_{\mathrm{op}}\le\lambda\}}(u)$, with $u\in \mathbb{R}^{p_1\times p_2}$.
  Substituting into \eqref{eq:gap-function-proximal} gives
  \(
    f(z;\beta)-d(u;\beta)
    \;=\;
    \lambda\|\theta\|_* - \langle u,\,\theta\rangle,
    \qquad \|u\|_{\mathrm{op}}\le\lambda,\quad \theta=\beta-u.
  \)
  The proximal mapping with respect to this $g$ has a closed-form solution given by singular-value soft-thresholding of $\theta$, but this approach requires computing the SVD of $\theta$. The same burden applies to evaluating $\|\theta\|_*$.
Instead, we use the variational form:
  \(
    \|\theta\|_*
    \;=\;
    \min_{\substack{A\in\mathbb{R}^{p_1\times r},\;B\in\mathbb{R}^{p_2\times r}\\AB^\top=\theta}}
    \frac{1}{2}\bigl(\|A\|_F^2+\|B\|_F^2\bigr),
  \)
  which holds for any $r= \min(p_1,p_2)$. After reparameterizing, we have a modified  prior
  \(
    \Pi_0(\theta=AB^\top,u\mid\lambda)
    \;\propto\;
    \exp\!\left[-\alpha\left\{
      \frac{\lambda}{2}\bigl(\|A\|_F^2+\|B\|_F^2\bigr)
      -\langle u,\,AB^\top\rangle
    \right\}\right]
    \pi_0^{\beta}(AB^\top+u),
  \)
  where $\|u\|_{\mathrm{op}}\le\lambda$.
\end{example}

\subsection{Posterior under gap-shrinkage}

We denote the likelihood as $\mathcal{L}(y;\theta,\kappa)$ with $y$ the data and $\kappa$ some other parameters (such as $\lambda$ used above) associated with a prior $\Pi_0(\kappa)$, then our posterior is given by
\(
\Pi(\theta,u,\kappa\mid y) =
\frac{
\mathcal{L}(y;\theta,\kappa) \Pi_0(\theta,u,\beta\mid \kappa) \Pi_0(\kappa)
}{
\int \mathcal{L}(y;\theta,\kappa) \Pi_0(\theta,u,\beta\mid \kappa) \Pi_0(\kappa) \textup{d}(\theta,u,\beta,\kappa).
}
\)
In this article, we focus on the case where the numerator is fully tractable, except for some normalizing constant invariant to $(\theta,u,\beta,\kappa)$.  Since the quantity of interest is often only the marginal posterior of $\theta$ and $\kappa$, one can collect estimates of the joint posterior in the augmented space, then discard the information of $u$ \citep{tanner1987calculation}. On the other hand, as we discussed above, having information about $u$ can be useful for quantifying the difference between $\theta$ and the exact projection $T(\beta)$.

 In the theory section, we will examine the large sample properties of the posterior under the gap-shrinkage prior carefully. For now, we conduct a simple analysis by examining the Hessian of the log-posterior, whose inverse estimates the asymptotic covariance of the parameters. For simplicity, we focus on the proximal mapping case now.
$$
\Pi(\theta,u,\kappa\mid y) \propto
\underbrace{\mathcal{L}(y;\theta,\kappa)  \pi^\beta_0(\theta+u)\Pi_0(\kappa)}_{=:\exp(-V(\theta,u,\kappa))} \exp [-\alpha \{ (g(\theta)+g^*(u) - u^\top \theta)\}].
$$
The observed negative Hessian given the data is
\begin{align}
J(\theta,u,\kappa)=
\nabla^2 V(\theta,u,\kappa)+
\alpha  \begin{bmatrix}
\nabla^2 g(\theta) &
- I &
0\\
  - I &
 \nabla^2 g^*(u) &
0\\
0 &
0 &
0
\end{bmatrix}.
\label{eq:hessian}
\end{align}
By \eqref{eq:zero-gap-condition}, at the primal-dual optimal $(\hat\theta,\hat u)$ and if differentiable, we have $\nabla^2 g^*(\hat u)=\big(\nabla^2 g(\hat\theta)\big)^{-1}$. Therefore, at $(\theta,u)$ near $(\hat\theta,\hat u)$, we have $\nabla^2 g(\theta) \nabla^2 g^*(u)-I \approx 0$. As a result, at a large $\alpha$, the first $2\times 2$ block of $J(\theta,u,\kappa)$ will be dominated by a close-to-singular matrix, corresponding to a nearly deterministic relation between $\theta$ and $u$.
\section{Posterior computation}
\subsection{Hamiltonian Monte Carlo}
An advantage of the Gap-shrinkage prior is that all parameters are continuous and the posterior density is tractable up to some constant.
 Therefore, Hamiltonian Monte Carlo (HMC) \citep{neal2011mcmc} is particularly suitable for posterior computation. We provide a brief overview of the algorithm. HMC augments $\xi$ with an auxiliary momentum variable $r\sim\mathrm{N}(0,M)$ and targets the joint density
$
\Pi(\xi,r\mid y) \propto \exp\bigl\{-\mathcal{H}(\xi,r)\bigr\}, \; \mathcal{H}(\xi,r) = U(\xi)+\tfrac{1}{2}r^\top M^{-1}r,
$
where $U(\xi)=-\log\Pi(\xi\mid y)$ is the potential energy. Hamilton's equations $\dot\xi = M^{-1}r$ and $\dot r = -\nabla_\xi U(\xi)$ define a volume-preserving, energy-conserving flow. One draws $r$ from the normal distribution, denotes the current state by $(\xi^{0},r^{0})$, and then simulates the flow to obtain a proposal $(\xi^{t},r^{t})$ after trajectory length $t$. If one could evaluate the flow exactly, then the energy-conserving property would ensure that $\mathcal H(\xi^{t},r^{t})=\mathcal H(\xi^{0},r^{0})$. Since in most cases the flow must be approximated numerically, typically by the leapfrog integrator:
$
r^{(1/2)} = r^{0}-\frac{\varepsilon}{2}\nabla_{\xi}U(\xi^{0}),\quad\xi^{1} = \xi^{0}+\varepsilon M^{-1}r^{(1/2)},\quad r^{1} = r^{(1/2)}-\frac{\varepsilon}{2}\nabla_{\xi}U(\xi^{1}),
$
 exact energy conservation no longer holds. Therefore, a Metropolis--Hastings step is used to accept the proposed state with probability $\min\Bigl\{1,\exp\bigl(-\mathcal H(\xi^{L},r^{L})+\mathcal H(\xi^{0},r^{0})\bigr)\Bigr\}.$ The No-U-Turn Sampler \citep{hoffman2014no} adaptively selects the trajectory length, removing the need to hand-tune the number of leapfrog steps.

 For gap-shrinkage prior under proximal loss, \eqref{eq:hessian} shows that the joint $(\theta,u)$ Hessian block becomes near-singular at large $\alpha$, which may degrade HMC performance. Fortunately, there is a nice remedy via reparameterizing $\Pi(\theta,u\mid y)$ into $\Pi(\beta,\theta\mid y)$ or $\Pi(\beta,u\mid y)$. In the former case,
 $$ \Pi_0(\theta \mid \beta) \propto \exp(-\alpha \{ \|\theta\|_2^2+ g(\theta)+g^*(\beta-\theta) - \beta^\top\theta  \}),$$
$g$ and $g^*$ are both convex, $\beta^\top\theta$ is linear in $\theta$, hence the conditional prior with $\|\theta\|_2^2$ is strongly log-concave in $\theta$, with prior $\text{Cov}(\theta\mid \beta) \preceq (2\alpha)^{-1} I$ \citep{saumard2014log}. The same result applies to $\Pi_0(u\mid\beta)$ as well. Since HMC is particularly effective for handling strongly log-concave distributions \citep{mangoubi2017rapid}, the conditional posterior $\Pi(\theta\mid\beta, \cdot)$ or  $\Pi(u\mid\beta, \cdot)$ is well-conditioned, and the effect of large $\alpha$ reduces to just having a small scale $1/(2\alpha)$. Therefore, we recommend using $(\beta,\theta)$ or $(\beta,u)$ as $\xi$ in HMC, instead of $(\theta,u)$. An alternative approach is to use the more general framework of Riemannian HMC \citep{girolami2011riemann}, which adapts to the local curvature of the potential.

\subsection{Metropolis-within-Gibbs sampling}
For some specific models, such as linear regression or generalized linear models, the Gibbs sampler can be more efficient than HMC in computation. Therefore, one could incorporate the existing Gibbs samplers, such as \cite{bhattacharya2016fast}, with an additional Metropolis-within-Gibbs step to update the dual parameters $u$. For conciseness, we omit the details in the main text, but provide an implementation for linear regression in the supplementary material.

\section{Theory on posterior consistency and contraction rate}
The gap-shrinkage prior in \eqref{eq:primal-dual-prior} is a relaxation of the exact projected prior. The exact projected prior is supported on the image of the projection map $T(\beta)$, while the gap-shrinkage prior assigns mass to $(\theta,u,\beta)$ with duality gap $\ge 0$, including values $\theta\neq T(\beta)$. This raises a natural concern: even though Theorems \ref{thm:shrinkage-effect} and \ref{thm:bregman-shrinkage-effect} show that a small gap implies that $\theta$ is close to the exact projection, as the sample size $n\to \infty$, does the posterior under the relaxed prior still concentrate at the true parameter $\theta_0$?

To address this question, we focus on the setting $Y_1,\ldots,Y_n\stackrel{\mathrm{iid}}{\sim} p_{\theta_0}$, where the true parameter $\theta_0$ lies in the projected set $\mathcal M=\{T(\beta):\beta\in \mathbb{R}^p\}$, including the possibility that $\theta_0$ is on the boundary of $\mathcal M$. In this section, $Y_i$ is synonymous with $y_i$, but we use the capital letter to emphasize that it is a random variable on which we describe the convergence in probability $P_{\theta_0}^n$. Suppose that under the exact projected prior, the posterior is consistent at $\theta_0$. Note that if the exact projected prior itself does not concentrate at $\theta_0$, then no comparison with its relaxation is meaningful.

\subsection{Posterior consistency}
We first show that the relaxed posterior is also consistent under a standard prior support condition. Let
$
G(\theta,u,\beta):=f(\theta;\beta)-d(u;\beta)\ge 0
$
denote the duality gap. Since $G(\theta,u,\beta)=0$ if and only if $\theta=T(\beta)$ and $u$ is dual optimal, the gap-shrinkage prior assigns positive mass to arbitrarily small neighborhoods of the exact projected set whenever its base prior does. Therefore, if the exact projected prior is consistent because it places sufficient mass near $\theta_0$, then the relaxed prior inherits the same local support near $\theta_0$. We now formally state the theorem.

    \begin{theorem}\label{thm:consistency-relaxed}
      Assume that $Y_1,\ldots,Y_n \stackrel{\mathrm{iid}}{\sim} p_{\theta_0}$, where $\{p_\theta:\theta\in\Theta\}$ is dominated by a common $\sigma$-finite measure, and let $\rho$ be a metric on $\Theta$. Suppose that the model satisfies the conditions of Schwartz's consistency theorem \citep{schwartz1965onbayes,barron1999consistency} using the metric $\rho$. Further assume that for every $\eta>0$, the gap-shrinkage prior assigns positive mass to
      $
      \Bigl\{(\theta,u,\beta):
      G(\theta,u,\beta)=0,\;
      KL(p_{\theta_0},p_\theta)<\eta
      \Bigr\},$ where
      $
      KL(p_{\theta_0},p_\theta)
      :=
      \int \log\left(\frac{p_{\theta_0}}{p_\theta}\right)p_{\theta_0}\,d\mu.
      $
      Then the posterior under the gap-shrinkage prior is consistent at $\theta_0$, that is, for every $\epsilon>0$, the posterior
      \(
      \Pi_n\!\left(\rho(\theta,\theta_0)>\epsilon \mid Y_{1:n}\right)\stackrel{P_{\theta_0}^n}{\to} 0.
      \)
      \end{theorem}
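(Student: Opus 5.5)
The plan is to reduce the statement to Schwartz's theorem applied to the \emph{marginal} prior on $\theta$ induced by the gap-shrinkage prior \eqref{eq:primal-dual-prior}. The augmented variables $(u,\beta)$ enter the likelihood only through $\theta$, since $\mathcal L(y;\theta)$ does not depend on $(u,\beta)$. So the first step is to write the augmented posterior as
\(
\Pi_n(\theta,u,\beta\mid Y_{1:n}) \propto \prod_{i=1}^n p_\theta(Y_i)\,\Pi_0(\theta,u,\beta),
\)
and integrate out $(u,\beta)$. This shows that the $\theta$-marginal of the augmented posterior equals the ordinary posterior obtained from the marginal prior $\tilde\Pi_0(\theta):=\int \Pi_0(\theta,u,\beta)\,\textup{d}(u,\beta)$. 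Here I would check, or state as part of the standing setup, that $\Pi_0$ is a proper probability measure, so that $\tilde\Pi_0$ is a proper prior on $\Theta$. Any hyperparameter $\kappa$ can be absorbed into the augmented variables in the same way. Consequently, the event $\{\rho(\theta,\theta_0)>\epsilon\}$ has the same posterior probability under both descriptions, and it suffices to prove consistency for $\tilde\Pi_0$.

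The second step is to verify the Kullback--Leibler support condition for $\tilde\Pi_0$. For every $\eta>0$, the set in the hypothesis is contained in the cylinder $\{(\theta,u,\beta): KL(p_{\theta_0},p_\theta)<\eta\}$. Its mass under $\Pi_0$ is exactly $\tilde\Pi_0\{\theta: KL(p_{\theta_0},p_\theta)<\eta\}$. Monotonicity of measure then gives
\(
\tilde\Pi_0\{\theta: KL(p_{\theta_0},p_\theta)<\eta\}\;\ge\;\Pi_0\{G=0,\ KL(p_{\theta_0},p_\theta)<\eta\}>0 .
\)
The remaining conditions of Schwartz's theorem concern only the model $\{p_\theta\}$ and the metric $\rho$, namely the existence of uniformly consistent tests for $\rho(\theta,\theta_0)>\epsilon$ or the Barron-type sieve and entropy conditions. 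They are therefore unaffected by the change of prior and hold by assumption. Invoking \citet{schwartz1965onbayes,barron1999consistency} then yields $\Pi_n(\rho(\theta,\theta_0)>\epsilon\mid Y_{1:n})\to 0$ in $P^n_{\theta_0}$-probability. If the Barron version is used, the sieve-mass condition must also be checked for $\tilde\Pi_0$. I would either include it in the phrase ``conditions of Schwartz's theorem'' or note that it is a condition on the $\theta$-marginal and is stated for that prior.

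The step I expect to be delicate is the meaning of the support hypothesis. The gap-shrinkage prior has a density on the augmented space, while $\{G=0\}$ is a lower-dimensional set (cf.\ $\mathbb M$), so literally it may carry zero mass. To make the argument robust, I would first try to replace the hypothesis by a neighborhood version. For each $\eta$, pick an exact pair $(\theta^\ast,u^\ast,\beta^\ast)$ with zero gap and $KL(p_{\theta_0},p_{\theta^\ast})<\eta/2$. Then use continuity of the gap $G$ and of $\theta\mapsto KL(p_{\theta_0},p_\theta)$ near $\theta^\ast$, assumed if necessary. Theorem \ref{thm:shrinkage-effect} (or Theorem \ref{thm:bregman-shrinkage-effect}) shows that small gap forces $\theta$ close to $T(\beta)$. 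Together these give a small open neighborhood of $(\theta^\ast,u^\ast,\beta^\ast)$ on which $KL<\eta$. Positivity of the density $\exp(-\alpha G)\pi_0^\beta$ there then gives positive $\Pi_0$-mass. Either way, the conclusion follows from the second step.
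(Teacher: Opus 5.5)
Your proposal is correct and follows essentially the same route as the paper. Both pass to the $\theta$-marginal of the gap-shrinkage prior, get Kullback--Leibler support by monotonicity because $\{G=0,\ KL<\eta\}$ lies inside the cylinder $\{KL<\eta\}$, and then invoke Schwartz's theorem. Your extra points are sound refinements that the paper's proof leaves implicit: the augmented posterior's $\theta$-marginal equals the posterior under $\tilde\Pi_0$, Barron's version has a prior-dependent sieve condition, and $\{G=0\}$ may be Lebesgue-null, so the hypothesis is more naturally read in a neighborhood form.
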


Theorem \ref{thm:consistency-relaxed} shows that the relaxation does not compromise posterior consistency so long as the gap-shrinkage prior preserves sufficient Kullback--Leibler support near the truth through its zero-gap component. This is particularly relevant when the true parameter $\theta_0$ lies on the boundary of $\mathcal M$: in such cases, nearby values outside $\mathcal M$ may be arbitrarily close in Euclidean distance, so exact support is not the essential issue for consistency.

\subsection{Contraction rate}
We now turn to contraction rates. Suppose that under the exact projected prior, the posterior contracts at rate $\varepsilon_n$ under the metric $\rho$, in the sense that for every sufficiently large $M>0$,$
\Pi^{\mathrm{ex}}\!\left(\rho\!\left(T(\beta),\theta_0\right)>M\varepsilon_n \mid Y_{1:n}\right)\to 0
$
in $P_{\theta_0}^n$-probability.
For the relaxed posterior, the concern is that the additional discrepancy between $\theta$ and $T(\beta)$ may deteriorate this rate. Intuitively,
$$
\rho(\theta,\theta_0)\le \underbrace{\rho(\theta,T(\beta))}_{\text{relaxation error}}+\underbrace{\rho(T(\beta),\theta_0)}_{\text{projection error}}.
$$
Therefore, the contraction rate could be established if both terms on the right-hand side are of $\varepsilon_n$ scale. In the following, we proceed in two steps by first showing $G(\theta,u,\beta)$ is of $\varepsilon_n^2$ scale hence enabling the possibility of $\rho(\theta,T(\beta))\downarrow 0$ at an $\varepsilon_n$ rate by Theorem~\ref{thm:shrinkage-effect}; we then establish the sufficient conditions for $\rho(T(\beta),\theta_0)\downarrow 0$ at the same rate.
First, the  prior promotes small gaps hence small $\rho(\theta,T(\beta))$ as {\em a priori}, but it does not  guarantee rate $\varepsilon_n$ for $\rho(\theta,T(\beta))$ in the posterior. Therefore, we need some technical conditions.

\begin{theorem}\label{thm:distance-relaxed}
Consider the following conditions.
 \begin{enumerate}
\renewcommand{\labelenumi}{(\roman{enumi})}
  \item There exists a sequence $\varepsilon_n \to 0$ such that $n\varepsilon_n^2 \to \infty$.

  \item There exists a sieve $\mathcal F_n \subseteq \Theta \times \mathcal U \times \mathbb R^p$ such that
  $
  \Pi_n\!\left((\theta,u,\beta)\in \mathcal F_n \mid Y_{1:n}\right)\stackrel{P_{\theta_0}^n}{\to} 1.
  $

  \item There exist constants $C_1,C_2>0$ and measurable sets $B_n \subseteq \Theta \times \mathcal U \times \mathbb R^p$ such that
  $
  \Pi_0(B_n)\ge e^{-C_1 n\varepsilon_n^2},
  $
  and for every $(\theta,u,\beta)\in B_n$,
  \(
  KL(p_{\theta_0},p_\theta)\le C_2\varepsilon_n^2,\quad
  V(p_{\theta_0},p_\theta)\le C_2\varepsilon_n^2,\quad
  G(\theta,u,\beta)\le C_2\varepsilon_n^2,
  \)
  where
  $
  V(p_{\theta_0},p_\theta)
  =
  \int \left[\log\!\left(\frac{p_{\theta_0}}{p_\theta}\right)\right]^2 p_{\theta_0}\,d\mu
  $ is the KL variance.

  \item There exist a deterministic sequence $\kappa_n>0$ and a constant $C_4>0$ such that, uniformly over $(\theta,u,\beta)\in \mathcal F_n$,
  $
  \ell_n(T(\beta))-\ell_n(\theta)
  \ge
  \kappa_n G(\theta,u,\beta)-C_4 n\varepsilon_n^2,
  $
  where $\ell_n(\theta)=\sum_{i=1}^n \log p_\theta(Y_i)$.

  \item  The constant $\alpha_n$ satisfies $
  \liminf_{n\to\infty} (\alpha_n+\kappa_n)\varepsilon_n^2 > 0
  $.
  \end{enumerate}
  Under (i)-(v), for every sufficiently large $M>0$,
  $
  \Pi_n\bigl(G(\theta,u,\beta)>M\varepsilon_n^2 \mid Y_{1:n}\bigr)\stackrel{P_{\theta_0}^n}{\to} 0.
  $
  \end{theorem}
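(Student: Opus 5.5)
This follows the standard Ghosal--Ghosh--van der Vaart route: bound the posterior numerator from above and the denominator from below. We work with the joint posterior on $(\theta,u,\beta)$. Its density with respect to the base measure is proportional to $\exp\{\ell_n(\theta)-\ell_n(\theta_0)\}\exp\{-\alpha_n G(\theta,u,\beta)\}\pi_0^\beta(\beta)$. By (ii), it suffices to show that
\[
\Pi_n\bigl(\{G>M\varepsilon_n^2\}\cap\mathcal F_n \mid Y_{1:n}\bigr)\to 0 .
\]
Write this probability as $N_n/D_n$, where
\[
N_n=\int_{\{G>M\varepsilon_n^2\}\cap\mathcal F_n} e^{\ell_n(\theta)-\ell_n(\theta_0)}\,d\Pi_0 ,
\qquad
D_n=\int e^{\ell_n(\theta)-\ell_n(\theta_0)}\,d\Pi_0 .
\]

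\textbf{Denominator.} Restrict the integral to $B_n$ and apply the standard evidence lower bound (Lemma 8.1 of Ghosal, Ghosh and van der Vaart). The KL and KL-variance bounds in (iii), together with Chebyshev's inequality, give
\[
D_n\ge \Pi_0(B_n)\,e^{-(1+C_2)n\varepsilon_n^2}\ge e^{-(C_1+C_2+1)n\varepsilon_n^2}
\]
with $P^n_{\theta_0}$-probability tending to one; condition (i) is what makes this work. The gap bound $G\le C_2\varepsilon_n^2$ on $B_n$ matters here only if the factor $e^{-\alpha_n G}$ is kept separate from $\Pi_0$. To be safe, I would write $d\Pi_0=e^{-\alpha_n G}d\tilde\Pi$ and note that on $B_n$ this factor is at least $e^{-\alpha_n C_2\varepsilon_n^2}$. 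This ties the normalisation to the denominator consistently, since the same factor appears in the numerator.

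\textbf{Numerator.} On $\mathcal F_n$, condition (iv) gives
\[
\ell_n(\theta)\le \ell_n(T(\beta))-\kappa_n G+C_4 n\varepsilon_n^2 .
\]
Hence on $\{G>M\varepsilon_n^2\}\cap\mathcal F_n$ the integrand is at most
\[
e^{\ell_n(T(\beta))-\ell_n(\theta_0)}\,e^{-(\alpha_n+\kappa_n)M\varepsilon_n^2}\,e^{C_4 n\varepsilon_n^2}
\]
with respect to $\tilde\Pi$, where I absorb the prior gap factor as well. Taking $P_{\theta_0}^n$-expectation and using Fubini with $E\,e^{\ell_n(T(\beta))-\ell_n(\theta_0)}\le 1$, we get
\[
E N_n\le \exp\{-(\alpha_n+\kappa_n)M\varepsilon_n^2+C_4n\varepsilon_n^2\}\,\tilde\Pi(\text{whole space}).
\]

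\textbf{Combining the bounds.} By Markov's inequality,
\[
N_n/D_n\le \exp\{-(\alpha_n+\kappa_n)M\varepsilon_n^2+(C_1+C_2+C_4+2)n\varepsilon_n^2\}
\]
with probability tending to one. The main obstacle is this final comparison: the gain from (iv)--(v) is $(\alpha_n+\kappa_n)\varepsilon_n^2 M$, while the losses are of order $n\varepsilon_n^2$. Condition (v) as stated only gives $(\alpha_n+\kappa_n)\varepsilon_n^2\ge c>0$, so the gain is at least $cM$. The natural reading, which I would adopt, is that $\kappa_n$ and $\alpha_n$ scale with $n$, i.e.\ $\liminf(\alpha_n+\kappa_n)/n>0$. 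Then the gain is at least $cMn\varepsilon_n^2$, and choosing $M>(C_1+C_2+C_4+2)/c$ drives the ratio to zero.

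I expect the delicate points to be two. First, one must interpret or sharpen (v) so that the exponent dominates $n\varepsilon_n^2$. Second, one must handle $\tilde\Pi$ being possibly improper in the $(\theta,u)$ directions. For the latter I would work on $\mathcal F_n$ and use the fact that the marginal over $u$ given $\beta$ is integrable once the factor $e^{-\alpha_n G}$ is included. In that case I would keep this factor inside the measure and bound it by $e^{-\alpha_n M\varepsilon_n^2}$ only partially, splitting $\alpha_n G=\tfrac12\alpha_n G+\tfrac12\alpha_n G$ so that one half ensures integrability and the other half supplies the decay.
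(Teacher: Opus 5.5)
Your route is the paper's route. You lower-bound the evidence on $B_n$ with the Ghosal--Ghosh--van der Vaart lemma, use (iv) on $\mathcal F_n$ to replace $\ell_n(\theta)$ by $\ell_n(T(\beta))-\kappa_nG+C_4n\varepsilon_n^2$, control $\int e^{\ell_n(T(\beta))-\ell_n(\theta_0)}$ by Fubini and Markov, and discard $\mathcal F_n^c$ by (ii).

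Your diagnosis of the final step is correct, and it is a defect of the statement itself, not a question of how to read (v). The paper's proof ends with the exponent $(C_4+C'+C)n\varepsilon_n^2+\alpha_nC_2\varepsilon_n^2-(\alpha_n+\kappa_n)M\varepsilon_n^2$ and simply invokes (v), which it even quotes in the weaker form $\liminf(\alpha_n+\kappa_n)>0$. Under (v) the negative part is only $O(M)$, while the positive part is of order $n\varepsilon_n^2\to\infty$, so the conclusion does not follow.

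In fact the statement is false as written. Take the following choices:
\begin{itemize}
\item $p_\theta\equiv p_{\theta_0}$, so $\ell_n$ is constant and $KL=V=0$;
\item a base prior under which $G\sim\mathrm{Unif}(0,1)$ (realizable, e.g., in one dimension with $g(z)=z^2/2$, where $G=2\{\theta-T(\beta)\}^2$);
\item $\varepsilon_n=n^{-1/4}$, $\alpha_n=\varepsilon_n^{-2}$, $\kappa_n=1$, $C_4=1$;
\item $\mathcal F_n=\{G\le 1\}$ and $B_n=\{G\le C_2\varepsilon_n^2\}$.
\end{itemize}
Then (i)--(iv) hold: $\Pi_0(B_n)=C_2n^{-1/2}$ is only polynomially small, and $0\ge G-n^{1/2}$ on $\mathcal F_n$. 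Condition (v) holds because $(\alpha_n+\kappa_n)\varepsilon_n^2\to 1$. Yet the posterior of $G$ has density proportional to $e^{-\alpha_n t}$ on $[0,1]$, so $\Pi_n(G>M\varepsilon_n^2\mid Y_{1:n})\to e^{-M}>0$ for every $M$.

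So your strengthening $\liminf(\alpha_n+\kappa_n)/n>0$ is genuinely necessary. With $\alpha_n+\kappa_n\ge cn$ eventually, the paper's exponent is at most $\{C_4+C'+C-(M-C_2)c\}n\varepsilon_n^2\to-\infty$ once $M>C_2+(C_4+C'+C)/c$.

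What you still need to repair is the bookkeeping of the factor $e^{-\alpha_nG}$. Your combined bound mixes two incompatible conventions:
\begin{itemize}
\item the denominator bound $D_n\ge \Pi_0(B_n)e^{-(1+C_2)n\varepsilon_n^2}$ treats $\Pi_0$ as the normalized gap-shrinkage prior, so it pays no $\alpha_n$ cost;
\item the numerator, with its $\alpha_n$ gain, is taken against $\tilde\Pi$ with $\tilde\Pi(\text{whole space})=1$ set silently.
\end{itemize}
These cannot both hold, since $d\Pi_0=e^{-\alpha_nG}d\tilde\Pi$ forces $\tilde\Pi$ to have mass at least one, and possibly infinite mass.

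The paper's convention removes the problem. There $\Pi_0$ in (iii) is a probability measure on $(\theta,u,\beta)$, and $e^{-\alpha_nG}$ is carried as a separate factor. Integrating out $(\theta,u)$ given $\beta$ then costs at most one, so there is no impropriety and no need for your half-splitting. The only price is the term $+\alpha_nC_2\varepsilon_n^2$ coming from the denominator, which is absorbed by taking $M>C_2$.

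If you instead keep the gap factor inside a normalized $\Pi_0$, two problems arise. Without splitting, only the $\kappa_n$ gain survives. With your split, you must also control the ratio of $\int e^{-\alpha_nG/2}d\tilde\Pi$ to $\int e^{-\alpha_nG}d\tilde\Pi$. That ratio is $O(1)$ in Laplace-type situations, but it requires an extra argument you have not supplied.
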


Theorem \ref{thm:distance-relaxed} gives a useful guide for choosing the shrinkage parameter $\alpha_n$. The quantity
$\kappa_n$
measures how strongly the likelihood itself favors the exact projection over a nonzero-gap value, while $\alpha_n$ is the additional regularization supplied by the prior. Thus the relevant quantity is rather the combined strength
$\alpha_n+\kappa_n$. If the likelihood already contributes at the scale $\varepsilon_n^{-2}$, then fixed $\alpha_n\equiv \alpha$ is enough to preserve the exact projected-prior rate. In contrast, if the likelihood does not adequately penalize the duality gap, then $\alpha_n$ must increase with $n$ to compensate.

Next, we bound the distance between $T(\beta)$ and $\theta_0$ under the relaxed posterior. There is a key distinction between $\Pi^{\mathrm{ex},\beta}(\cdot\mid Y_{1:n})$ under the exact projected prior and $\Pi_n^{\mathrm{rel},\beta}(\cdot\mid Y_{1:n})$ under the relaxed prior: under the exact projected prior, the likelihood is evaluated at $\theta=T(\beta)$, whereas under the relaxed prior it is evaluated at a relaxation $\theta$ of $T(\beta)$.

\begin{theorem}\label{thm:beta-transfer}
Suppose there exists a measurable set $\mathcal B_n\subseteq \mathbb R^p$ and a deterministic sequence $\delta_n\to 0$ such that $\Pi_n^{\mathrm{ex},\beta}(\mathcal B_n\mid Y_{1:n})\stackrel{P_{\theta_0}^n}{\longrightarrow} 1$ and
\[\label{eq:beta-transfer-condition}
 P_{\theta_0}^n\left(\sup_{\beta\in\mathcal B_n}
\left|
\frac{m_n^{\mathrm{rel}}(\beta)}
{c_n \exp\{\ell_n(T(\beta))\}}-1
\right|
\le \delta_n\right)\to 1.
\]
for some deterministic $c_n>0$ and
$
m_n^{\mathrm{rel}}(\beta)
:=
\iint \exp\{\ell_n(\theta)\}\exp\{-\alpha_n G(\theta,u,\beta)\}\,d\theta\,du.
$
Then
$$
\left\|
\Pi_n^{\mathrm{rel},\beta}(\cdot\mid Y_{1:n})
-
\Pi_n^{\mathrm{ex},\beta}(\cdot\mid Y_{1:n})
\right\|_{\mathrm{TV}}
\stackrel{P_{\theta_0}^n}{\longrightarrow} 0.
$$
Consequently, if for sufficiently large $M>0$,
$
\Pi_n^{\mathrm{ex},\beta}\bigl(\rho(T(\beta),\theta_0)>M\varepsilon_n\mid Y_{1:n}\bigr)\stackrel{P_{\theta_0}^n}{\longrightarrow} 0,
$
then also
$
\Pi_n^{\mathrm{rel},\beta}\bigl(\rho(T(\beta),\theta_0)>M\varepsilon_n\mid Y_{1:n}\bigr)\stackrel{P_{\theta_0}^n}{\longrightarrow} 0.
$
\end{theorem}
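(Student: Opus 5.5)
The plan is to write both marginal posteriors of $\beta$ explicitly and compare their densities with respect to the base prior $\pi_0^\beta$. Under the exact projected prior, the $\beta$-marginal posterior is
\[
\Pi_n^{\mathrm{ex},\beta}(d\beta\mid Y_{1:n})=\frac{e^{\ell_n(T(\beta))}\pi_0^\beta(\beta)\,d\beta}{\int e^{\ell_n(T(\beta'))}\pi_0^\beta(\beta')\,d\beta'}.
\]
Under the gap-shrinkage prior \eqref{eq:primal-dual-prior}, with $\alpha=\alpha_n$, integrating out $(\theta,u)$ gives
\[
\Pi_n^{\mathrm{rel},\beta}(d\beta\mid Y_{1:n})=\frac{m_n^{\mathrm{rel}}(\beta)\pi_0^\beta(\beta)\,d\beta}{\int m_n^{\mathrm{rel}}(\beta')\pi_0^\beta(\beta')\,d\beta'}.
\]
Any normalizing constant of the joint prior cancels here. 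For this step I would read the prior as conditionally unnormalized in $(\theta,u)$ given $\beta$, exactly as \eqref{eq:primal-dual-prior} is written, so that $m_n^{\mathrm{rel}}$ is the correct integrated likelihood. Write $w(\beta)=c_n e^{\ell_n(T(\beta))}$. Then $\Pi^{\mathrm{rel}}$ has density proportional to $m(\beta)$ and $\Pi^{\mathrm{ex}}$ has density proportional to $w(\beta)$, both against $\pi_0^\beta$. The constant $c_n$ does not affect $\Pi^{\mathrm{ex}}$.

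The key step is a general lemma on reweighted measures. Let $P$ and $Q$ be probability measures with $dQ/dP\propto r$, and suppose $|r-1|\le\delta$ on a set $B$ with $P(B)\ge 1-\eta$. Then $\|Q-P\|_{\mathrm{TV}}\lesssim \delta+\eta+Q(B^c)$. Here $P=\Pi^{\mathrm{ex},\beta}$ and $r=m/w$. The bound on $B$ comes from condition \eqref{eq:beta-transfer-condition}, and $\eta=\Pi^{\mathrm{ex},\beta}(\mathcal B_n^c)\to 0$ in probability.

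The main obstacle is controlling $Q(\mathcal B_n^c)$, the mass the relaxed posterior places outside $\mathcal B_n$, because \eqref{eq:beta-transfer-condition} gives no information about $m$ there. I would handle it through normalizers. On the event in \eqref{eq:beta-transfer-condition},
\[
\int m\,\pi_0^\beta\;\ge\;(1-\delta_n)\int_{\mathcal B_n} w\,\pi_0^\beta\;=\;(1-\delta_n)(1-\eta_n)\int w\,\pi_0^\beta .
\]
This lower bound alone does not bound $\int_{\mathcal B_n^c} m\,\pi_0^\beta$. I would try to close the gap with $\theta=T(\beta)$, $u$ dual-optimal: a Laplace-type comparison would give $m\le(1+o(1))w$ globally, but that needs an extra assumption. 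The cleaner route is to read $\mathcal B_n$ in \eqref{eq:beta-transfer-condition} as carrying the relaxed mass as well, that is, to additionally use (or justify from the same control) $\Pi^{\mathrm{rel},\beta}(\mathcal B_n^c\mid Y_{1:n})\to 0$. Failing that, I would restrict both posteriors to $\mathcal B_n$. On $\mathcal B_n$ the ratio of the unnormalized densities lies in $[1-\delta_n,1+\delta_n]$, so the restricted normalized measures are within $2\delta_n/(1-\delta_n)$ in total variation. Each restricted measure is then within its own outside mass of the unrestricted one, which gives the TV claim.

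The consequence follows directly from $|\Pi^{\mathrm{rel}}(A)-\Pi^{\mathrm{ex}}(A)|\le\|\cdot\|_{\mathrm{TV}}$ applied to $A=\{\rho(T(\beta),\theta_0)>M\varepsilon_n\}$. Both terms tend to zero in $P_{\theta_0}^n$-probability, and all the events above are intersected with the high-probability event from \eqref{eq:beta-transfer-condition}.
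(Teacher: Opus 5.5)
You have the same reduction as the paper. Both $\beta$-marginals have densities $m_n^{\mathrm{rel}}$ and $e^{\ell_n(T(\beta))}$ against $\Pi_0^\beta$, the constant $c_n$ cancels, and the total variation distance splits into a comparison on $\mathcal B_n$ plus the two posterior masses of $\mathcal B_n^c$. The genuine gap is the one you flagged, and your fallback does not remove it.

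Write $Q_n=\Pi_n^{\mathrm{rel},\beta}(\cdot\mid Y_{1:n})$, $P_n=\Pi_n^{\mathrm{ex},\beta}(\cdot\mid Y_{1:n})$ and $\eta_n=P_n(\mathcal B_n^c)$. Restricting and renormalizing both posteriors on $\mathcal B_n$ correctly gives $\|Q_n-P_n\|_{\mathrm{TV}}\le Q_n(\mathcal B_n^c)+2\delta_n/(1-\delta_n)+\eta_n$. But the first term is exactly the quantity you could not control, so the fallback only restates the problem.

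The hypotheses cannot supply this bound. Both $P_n(\mathcal B_n)\to 1$ and \eqref{eq:beta-transfer-condition} are unchanged if $m_n^{\mathrm{rel}}$ is altered on $\mathcal B_n^c$, whereas $Q_n(\mathcal B_n^c)$ is not. This is not an empty worry in the model either. For $\beta$ with $T(\beta)$ far from $\theta_0$, $\theta$ can sit near $\theta_0$ at a gap cost that does not depend on $n$. So $m_n^{\mathrm{rel}}(\beta)/e^{\ell_n(T(\beta))}$ can be as large as $\exp\{cn-O(\alpha_n)\}$ there, which is exponentially large unless $\alpha_n$ grows at least linearly in $n$.

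The paper's proof has the same hole. It writes the same three-term bound and declares all three terms $o_P(1)$ without arguing that $Q_n(\mathcal B_n^c)\to 0$. Its remark that on $\mathcal B_n$ the two posterior densities ``differ only by $1+r_n(\beta)$'' holds only for the unnormalized densities. The normalized ones also differ by the ratio of normalizers, which again involves $\int_{\mathcal B_n^c}m_n^{\mathrm{rel}}\,d\Pi_0^\beta$. So you are not missing an idea the paper supplies; the statement needs one more hypothesis, and you should state it openly as an added assumption.

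A convenient sufficient condition is a one-sided envelope off $\mathcal B_n$: $m_n^{\mathrm{rel}}(\beta)\le K c_n e^{\ell_n(T(\beta))}$ for $\beta\in\mathcal B_n^c$. Here $K$ can be fixed, or replaced by $K_n$ with $K_n\eta_n\to 0$ in probability. Combine it with your own lower bound, which holds on the high-probability event of \eqref{eq:beta-transfer-condition}: $\int_{\mathcal B_n}m_n^{\mathrm{rel}}\,d\Pi_0^\beta\ge(1-\delta_n)(1-\eta_n)c_n\int e^{\ell_n(T(\beta))}\,d\Pi_0^\beta$. Together they give $Q_n(\mathcal B_n^c)\le K\eta_n/\{(1-\delta_n)(1-\eta_n)\}\to 0$. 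After that, your restricted comparison and the set-wise consequence go through as you wrote them.

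Assuming $Q_n(\mathcal B_n^c)\to 0$ directly also works. A global Laplace-type bound $m_n^{\mathrm{rel}}\le(1+o(1))c_n e^{\ell_n(T(\beta))}$ would work too, but it is stronger than necessary.
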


The condition \eqref{eq:beta-transfer-condition} can hold under reasonable assumptions, such as appropriate local smoothness of the likelihood around the zero-gap point. In the interest of keeping the main exposition clear, we defer further theory development to the supplementary material.
With the above ingredients, we can now establish the contraction rate of the relaxed posterior.

\begin{theorem}\label{cor:relaxed-posterior-rate}
Assume conditions in Theorems \ref{thm:distance-relaxed} and \ref{thm:beta-transfer} hold, and further $\exists C>0:$
  $
  \rho(\theta,\theta_0)\le C\|\theta-T(\beta)\|_2+\rho(T(\beta),\theta_0).
  $
  Then, for sufficiently large $M>0$,
  $
  \Pi_n\bigl(\rho(\theta,\theta_0)>M\varepsilon_n\mid Y_{1:n}\bigr)\stackrel{P_{\theta_0}^n}{\longrightarrow} 0.
  $
  \end{theorem}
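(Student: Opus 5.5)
The plan is to combine the two preceding results through the assumed triangle-type inequality. Using $\rho(\theta,\theta_0)\le C\|\theta-T(\beta)\|_2+\rho(T(\beta),\theta_0)$, the event $\{\rho(\theta,\theta_0)>M\varepsilon_n\}$ is contained in the union
\[
\Bigl\{C\|\theta-T(\beta)\|_2>\tfrac{M}{2}\varepsilon_n\Bigr\}\;\cup\;\Bigl\{\rho(T(\beta),\theta_0)>\tfrac{M}{2}\varepsilon_n\Bigr\}.
\]
By subadditivity of the posterior, it then suffices to show that each of these two events has relaxed posterior probability tending to zero in $P_{\theta_0}^n$-probability, once $M$ is large enough.

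For the first event I would use Theorem~\ref{thm:shrinkage-effect}. In the proximal or Euclidean-projection setting, $f(\cdot;\beta)$ is $1$-strongly convex, or more generally $\mu$-strongly convex, and $T(\beta)$ is its unique minimizer. This gives, deterministically for every $(\theta,u,\beta)$,
\[
\|\theta-T(\beta)\|_2\le\sqrt{(2/\mu)\,G(\theta,u,\beta)}.
\]
Hence $\{C\|\theta-T(\beta)\|_2>M\varepsilon_n/2\}\subseteq\{G(\theta,u,\beta)>\mu M^2\varepsilon_n^2/(8C^2)\}$. For $M$ large, the constant $\mu M^2/(8C^2)$ exceeds the threshold $M'$ required in Theorem~\ref{thm:distance-relaxed}. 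That theorem then shows the posterior mass of this set vanishes in probability.

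For the second event, the function $\rho(T(\beta),\theta_0)$ depends only on $\beta$. Its posterior probability under the joint relaxed posterior therefore equals the probability under the $\beta$-marginal $\Pi_n^{\mathrm{rel},\beta}$. Theorem~\ref{thm:beta-transfer} gives the TV approximation by $\Pi_n^{\mathrm{ex},\beta}$ and hence the transfer of the exact-prior rate. Applying this with $M/2$ in place of $M$, which is allowed because that conclusion holds for all sufficiently large $M$, shows this term also vanishes.

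The main obstacle is bookkeeping about which strong-convexity constant applies. The statement of Theorem~\ref{cor:relaxed-posterior-rate} does not restate the strong convexity hypothesis, so I would note explicitly that it is inherited from the proximal-mapping setting ($\mu=1$) or assumed as in Theorem~\ref{thm:shrinkage-effect}. A second point to check is that the "sufficiently large $M$" quantifiers from the two earlier theorems can be met simultaneously. This holds because both thresholds are monotone in $M$, so taking $M$ larger than the maximum of the two requirements suffices.
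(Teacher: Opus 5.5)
Your proposal is correct and follows the same route as the paper. Theorem~\ref{thm:distance-relaxed} combined with Theorem~\ref{thm:shrinkage-effect} controls $\|\theta-T(\beta)\|_2$ at rate $\varepsilon_n$, Theorem~\ref{thm:beta-transfer} transfers the contraction of $T(\beta)$, and the assumed triangle-type inequality joins the two. Your write-up is more careful than the paper's in three places: the explicit $M/2$ split, the explicit threshold $\mu M^2/(8C^2)$, and your observation that strong convexity with a fixed constant $\mu$ is an unstated hypothesis, which the paper's proof also uses silently when it invokes Theorem~\ref{thm:shrinkage-effect}.
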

  An alternative to size-adaptive $\alpha_n$ is a two-stage approach, where a fixed value $\alpha_n = \alpha$ is used during posterior computation, and the projection $T(\beta)$ is applied in a post-processing step on the Markov chain samples after they have converged. Since this alternative is beyond the methodological scope of this article, we leave it for future work.

\section{Simulation Studies}
\subsection{Sparse regression}
We conduct experiments in a sparse linear regression setting with $n=200$ observations and $p=500$ predictors. The design matrix has $x_{ij}\sim \text{N}(0,1)$, and the true coefficient vector $\theta_0\in\mathbb{R}^p$ is sparse with $5$ nonzero entries drawn uniformly from $\{-4,-2,2,4\}$; the response $y_i=x_i^\top \theta_0+\varepsilon_i$ with $\varepsilon_i\sim \text{N}(0,1)$. We use the Inverse-Gamma$(2,1)$ prior for $\lambda$.

We compare three methods that admit data-augmented Gibbs samplers, the gap-shrinkage prior ($\alpha=1000$), the Bayesian lasso \citep{park2008bayesian}, and the Generalized Double Pareto (GDP) prior \citep{armagan2013generalized} in its hierarchical Laplace form; in addition, we also compare with the $\ell_1$-ball projected prior model. For these three methods the full conditionals are available in closed form, enabling exact Gibbs updates with no gradient evaluations. The $\ell_1$-ball projected prior \citep{xu2023l1ball} is fit with NUTS; its projection step (a sorting subroutine) must be executed at every  leapfrog evaluation, making it more expensive per iteration. Each experiment is repeated over 20 independent replications with 1000 warmup and 1000 sampling iterations.

Figure~\ref{fig:gap-shrinkage-prior}(b--c) shows violin plots of posterior samples of $\theta$ for 15 selected coefficients under the gap-shrinkage prior and the $\ell_1$-ball prior. The gap-shrinkage prior concentrates its mass near the ground truth and correctly identifies the sparse support, closely matching the exact $\ell_1$-ball projected posterior. The GDP posterior is similar and is not shown separately. The Bayesian lasso posterior shows inadequate shrinkage and biases for those nonzero coefficients. Figure~\ref{fig:gap-shrinkage-prior}(d) compares the effective sample size (ESS) per second across methods. The gap-shrinkage Gibbs sampler achieves the highest ESS per second (median $\approx 82$), followed by the Bayesian lasso Gibbs (median $\approx 62$) and GDP Gibbs (median $\approx 21$). The $\ell_1$-ball NUTS sampler is far slower (median $\approx 2$ ESS/sec) due to the overhead of repeated projection evaluations inside each leapfrog step. The autocorrelation function (ACF) boxplot in Figure~\ref{fig:gap-shrinkage-prior}(e) shows that the gap-shrinkage Gibbs sampler achieves rapid mixing.

\subsection{Low-rank and sparse matrix smoothing}
In the supplementary material, we show the simulations of a gap-shrinkage prior that bypasses expensive SVD operations.

\begin{figure}[H]
  \begin{subfigure}[t]{0.31\textwidth}
    \centering
    \includegraphics[width=1\textwidth]{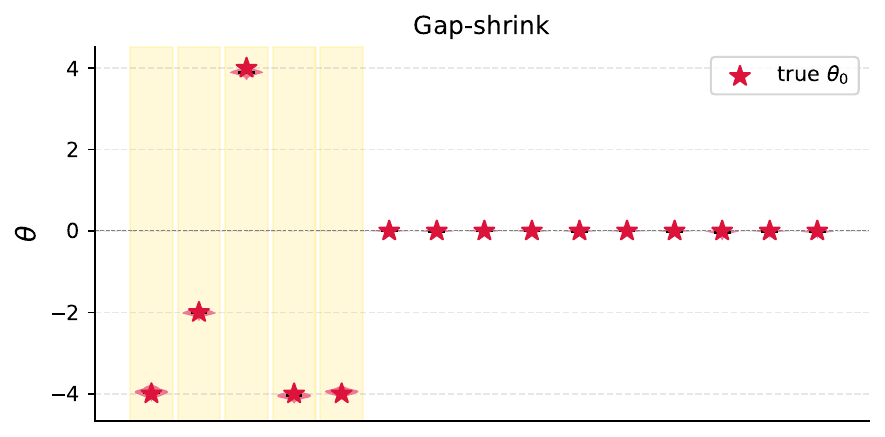}
    \caption{Posterior samples of $\theta$ under gap-shrinkage prior.}
  \end{subfigure}
  \hfill
  \begin{subfigure}[t]{0.31\textwidth}
    \centering
    \includegraphics[width=1\textwidth]{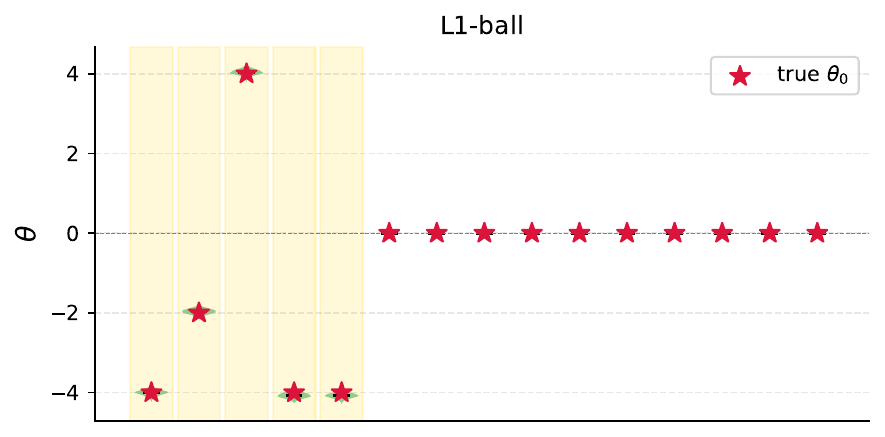}
    \caption{Posterior samples of $\theta$ under $\ell_1$-ball projected prior.}
  \end{subfigure}%
  \hfill
  \begin{subfigure}[t]{0.31\textwidth}
    \centering
    \includegraphics[width=1\textwidth]{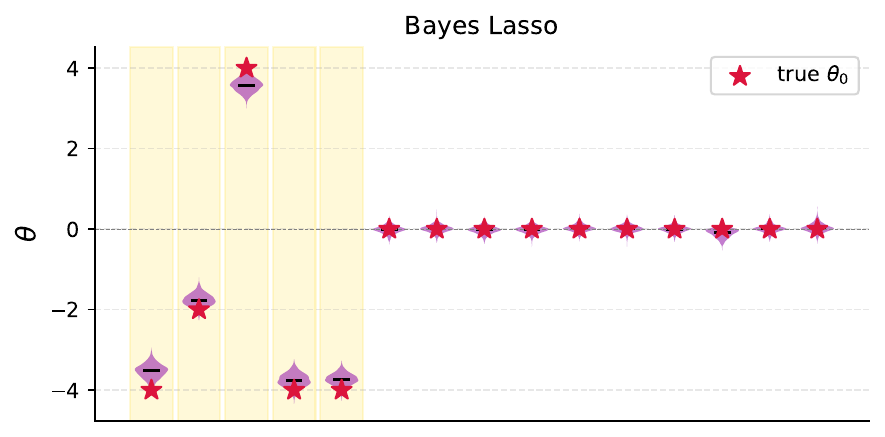}
    \caption{Posterior samples of $\theta$ under Bayesian lasso.}
  \end{subfigure}\\
  \begin{subfigure}[b]{0.5\textwidth}
    \centering
    \includegraphics[width=1\textwidth]{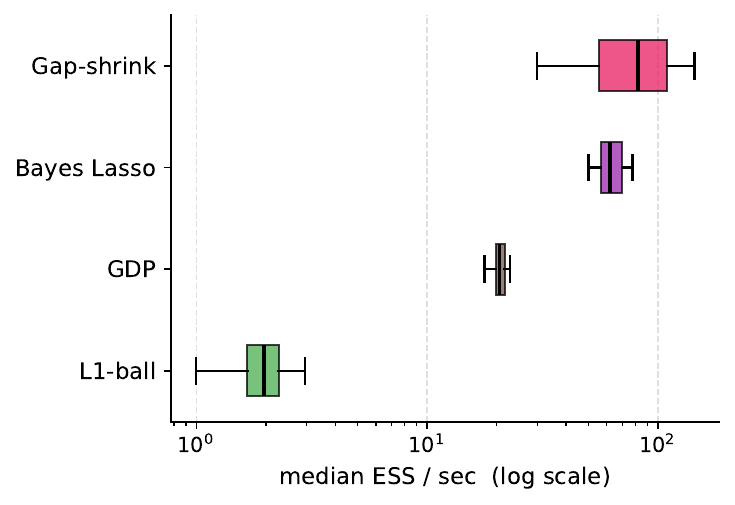}
    \caption{Median ESS per second.}
  \end{subfigure}%
  \hfill
  \begin{subfigure}[b]{0.48\textwidth}
    \centering
    \includegraphics[width=1\textwidth,height=4cm]{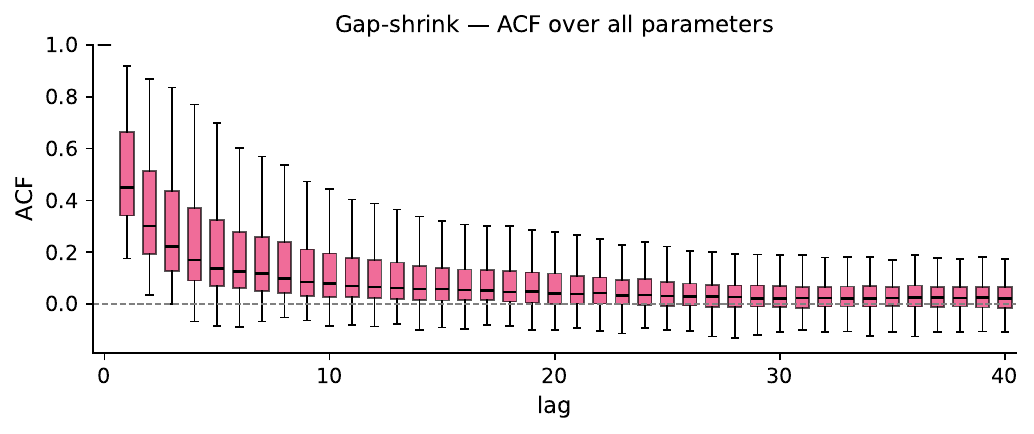}
    \caption{ACF boxplots (pooled over all $p$ parameters) under gap-shrinkage prior show rapid decay in autocorrelation.}
  \end{subfigure}
  \caption{Sparse regression results ($n=200$, $p=500$, 20 replications).}
  \label{fig:gap-shrinkage-prior}
\end{figure}
\vspace{-0.5cm}
\section{Data Application}
We now apply the gap-shrinkage model to household-level grocery purchase data from the 2020 Nielsen Consumer Panel.
We use the shopping records from $H=793$ households who made $n=803$ trips to grocery retailers, purchasing items from $m=26$ product subcategories under six departments: alcoholic beverages, dairy, deli, fresh produce, frozen foods, and packaged meat. Each response is a binary vector $y_i\in\{0,1\}^m$ indicating whether each of the $m=26$ subcategories was purchased on trip $i$. The data include several covariates: standardized total spending, total quantity, average paid price, quarters (Q1,Q2,Q3,Q4), and regions (Midwest, Northeast, South, West) derived from store ZIP code; in total, they correspond to $p=9$. For these multivariate binary responses, we use the probit model
\(
 y_{ij} \sim \text{Bernoulli}[\Phi(\mu_{ij})],\qquad
 \mu_i = \theta x_i + W\gamma_{h(i)},
\)
where $i$ is shopping trip, $j$ is product subcategory,  $h(i)$ is household for trip $i$, and $\mu_i \in \mathbb R^m$.
$\theta\in \mathbb{R}^{m\times p}$ is the category-by-covariate coefficient matrix, $\Phi$ is the standard Gaussian distribution function, $W\gamma_{h(i)}$ captures the household-specific random effect, $W\in\mathbb R^{m\times d}$ is a category loading matrix, and $\gamma_h\in [0,\infty)^d$ is a household-specific latent effect.

Since the coefficient matrix $\theta$ has $234$ elements, we impose a taxonomy-based smoothing prior across categories. The taxonomy is shown in the supplementary material. Let $G=(V,E)$ be a complete weighted graph on the $m$ subcategories, where $V=\{1,\ldots,m\}$. We fix the weight $\lambda_{j,j'}=1$ when the two subcategories belong to the same department and use a smaller weight $\lambda_{j,j'}=\omega_{\mathrm{cross}}\in(0,1)$ otherwise. Let $B\in\mathbb R^{|E|\times m}$ be a directed incidence matrix and let $\Lambda$ denote the corresponding diagonal matrix of edge weights. Then smoothing corresponds to the proximal loss
$
f(z;\beta)
=
\frac12\|\beta-z\|_F^2+\rho\|\Lambda B z\|_{1,1},
$
where $\|\cdot\|_{1,1}$ denotes the entrywise $\ell_1$ norm. For edge $e = (j, j')$, $B_{e,k} =
   +1$ if $k = j$, $B_{e,k} = -1$ if $k = j'$, and $B_{e,k} = 0$ otherwise. Each row of $Bz$ computes the difference $z_j - z_{j'}$ along one edge.
The corresponding gap-shrinkage prior is
\(
\Pi_0(\theta,v,\beta\mid \rho)
\propto
\exp\left[-\alpha\left\{\rho\|\Lambda B\theta\|_{1,1}-\langle \theta,B^\top \Lambda v\rangle\right\}\right]\pi^\beta_0(\beta),
\)
subject to $\theta=\beta-B^\top\Lambda v$ and $|v_{e,k}|\le \rho$. We set the prior for global smoothing strength $\rho\sim\text{Inverse-Gamma}(2,1)$, cross-department weight $\omega_{\mathrm{cross}}\sim\mathrm{Beta}(1,1)$, $d=5$ for the low-dimensional random effect structure, a Gaussian prior on $W$, and independent half-Gaussian priors on the coordinates of $\gamma_h$. Since $\theta$ is on the probit scale, heavy tails are not expected a priori, so we use a Gaussian prior kernel
$
\pi^\beta_0(\beta)\propto \exp\{-\|\beta\|_F^2/200\}
$
for computational convenience.
We ran $1000$ warmup iterations followed by $10000$ retained draws using NUTS. The fitted gap-shrinkage model yields a posterior mean of smoothing strength as $\rho\approx 0.0045$, while the cross-department weight is driven close to zero, with a posterior mean of $\omega_{\mathrm{cross}}\approx 2.8\times 10^{-4}$. Thus, the posterior favors smoothing within departments while applying almost no smoothing across departments. For the nine predictors, the posterior means of the coefficients are shown as boxplots in Figure~\ref{fig:nielsen-gap-violin}.

\begin{figure}[H]
  \begin{subfigure}[b]{0.45\textwidth}
    \centering
    \includegraphics[width=\textwidth,height=4cm]{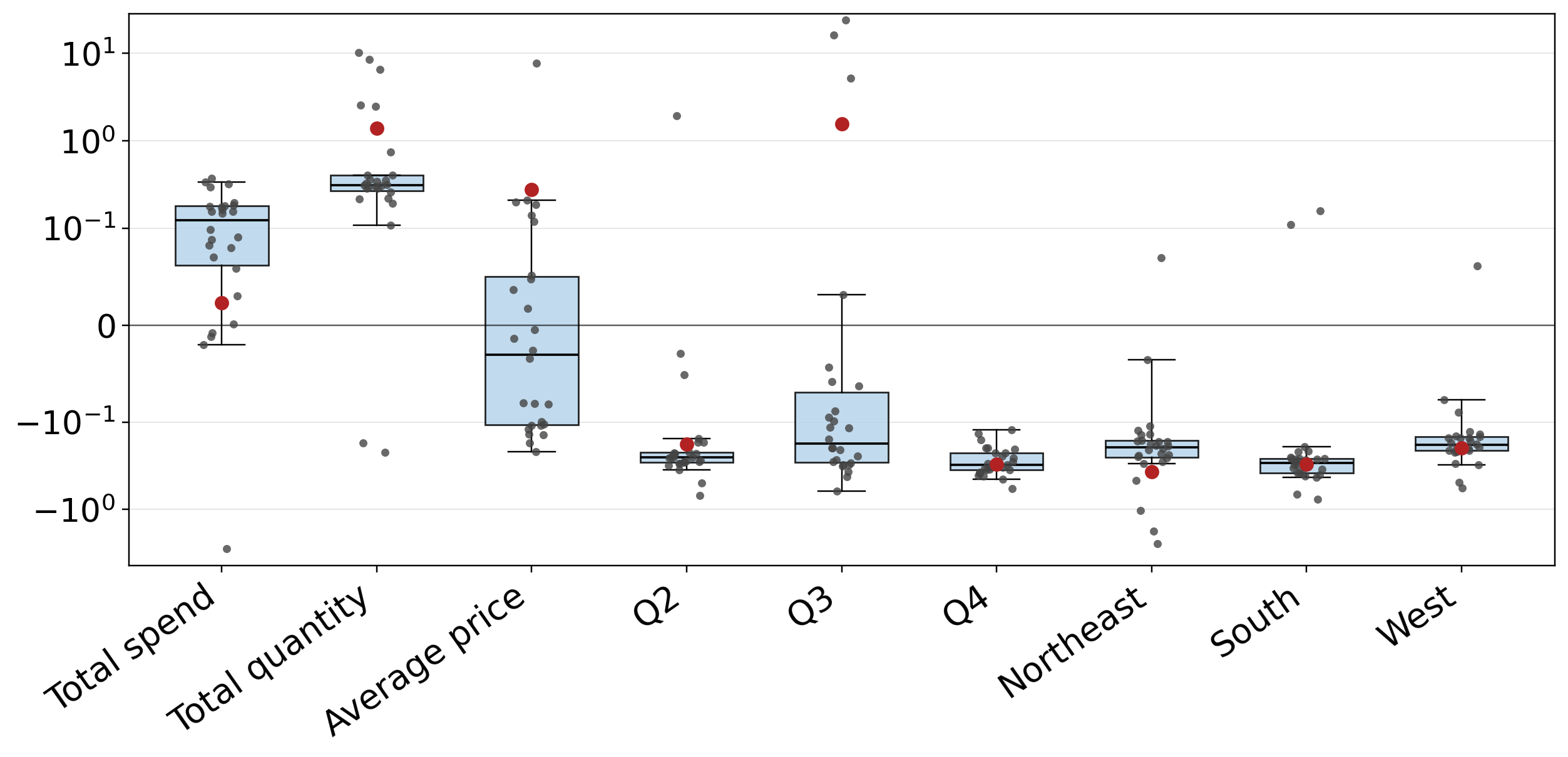}
    \caption{Distribution across the 26 subcategories of the posterior mean coefficient for each predictor under the gap-shrinkage model. The red dots mark the across-category posterior mean for each predictor.}
    \label{fig:nielsen-gap-violin}
  \end{subfigure}\qquad
  \begin{subfigure}[b]{0.45\textwidth}
    \centering
    \includegraphics[width=1\linewidth]{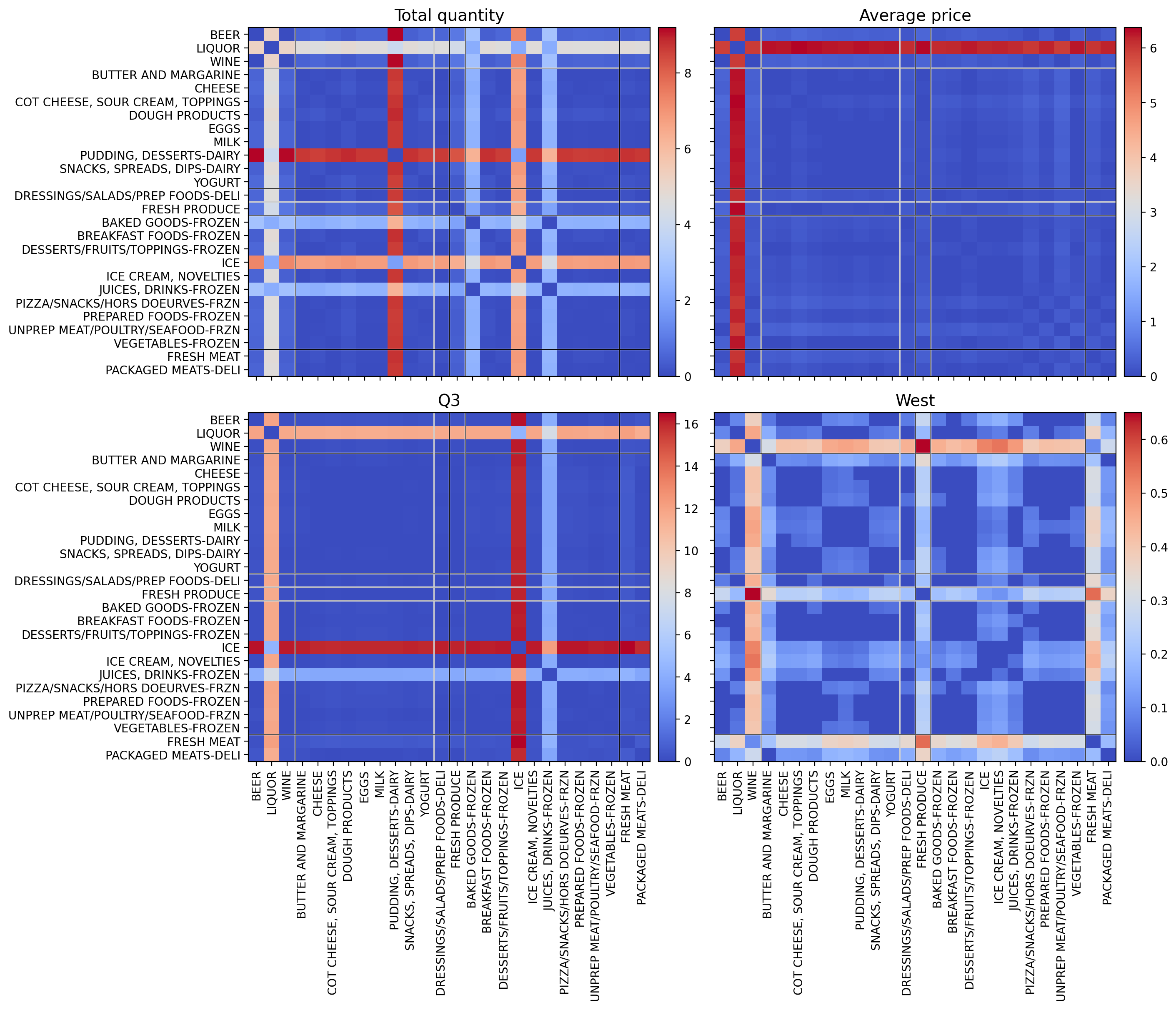}
    \caption{Four selected heatmaps of absolute pairwise posterior mean differences under the gap-shrinkage model, ordered by department.}
    \label{fig:nielsen-gap-heatmap_selected}
  \end{subfigure}
  \caption{(a) Posterior mean coefficients for each predictor by subcategory under the gap-shrinkage model; (b) Four selected heatmaps of absolute pairwise posterior mean differences under the gap-shrinkage model, ordered by department.}
\end{figure}
\vspace{-0.5cm}
Figure \ref{fig:nielsen-gap-heatmap_selected} shows four selected heatmaps of the absolute pairwise differences that are especially interpretable in this application. Each cell in a heatmap shows how differently two subcategories respond to a covariate. For total quantity, PUDDING, DESSERTS-DAIRY stands out because its posterior mean coefficient is strongly positive relative to nearly all other subcategories. For average price, LIQUOR is the dominant outlier, which is consistent with the fact that liquor purchases tend to be associated with higher average prices. For the third-quarter indicator (July--September), ICE shows an exceptionally large positive coefficient, which distinguishes it from the other seasonal indicators reported in the supplementary material, suggesting distinctly higher purchase probability in Q3. By contrast, the West region indicator is much more homogeneous overall; however, WINE remains noticeably more negative than FRESH PRODUCE and most other subcategories.

For comparison, we fit two alternate models, with the same likelihood and latent household term, but different priors on $\theta$: (i) a Bayesian fused-lasso prior with
$
\pi^\theta_0(\theta\mid \rho)\propto \exp\{-\rho\|\Lambda B\theta\|_{1,1}\},
$
and retaining the same prior for $\rho$ and $\omega_{\mathrm{cross}}$; (ii) an independent Gaussian prior with $\pi^\theta_0(\theta)\propto \exp\{-\|\theta\|_F^2/200\}$. Under the Bayesian fused-lasso prior, the model drives $\rho$ to an extremely large scale, with posterior mean about $9.94\times 10^6$, while pushing the cross-department weight to a posterior mean of about $10^{-8}$. As a result, the coefficient estimate under the Bayesian fused-lasso prior is much more homogeneous than the gap-shrinkage prior within department. The predictive ranking performance of the Bayesian fused-lasso model is also weaker. Using posterior mean probit scores and pooling all trip-category outcomes, the area under the ROC curve (AUC) is $0.867$ for the Bayesian fused-lasso model, compared with $0.897$ for the gap-shrinkage model. The independent Gaussian prior on $\theta$ gives AUC $0.887$ but produces much more dispersed category-level coefficients. We show more details in the supplementary material.
\vspace{-0.5cm}

\section{Discussion}
There are several future directions worth further exploration. First, we considered projection onto a convex set. It would be interesting to consider projection onto nonconvex sets as well, for example, those arising in principal-curve \citep{hastie1989principal} or predictively-optimal parameter space \citep{mclatchie2025predictively}, although in such settings identifiability and local geometry would require  careful treatment. Second, the connection between gap-shrinkage and global-local prior literature suggests a new view of Bayesian shrinkage: instead of shrinking coefficients toward zero, one can shrink them toward a structured set, in our setting, the image of a projection map. The idea could extend to any set satisfying an optimality condition.
\vspace{0.5cm}

\noindent\textbf{Declaration of the use of generative AI and AI-assisted technologies:}
During the preparation of this work the authors used Claude Code in order to assist with code development and figure plotting. After using this tool the authors reviewed and edited the content as necessary and take full responsibility for the content of the publication.

\appendix

\medskip
{\centering\huge\textbf{Supplementary Material}}

\renewcommand{\thesection}{\Alph{section}}

\section{Proofs}

\subsection{Tail bound for gap-shrinkage prior for $\ell_1$-norm proximal loss}
\begin{lemma}
  For the marginal prior $\pi_0(\theta | \lambda) = \prod_{j=1}^p \pi_0(\theta_j | \lambda)$, for $|\theta_j|$ sufficiently large, there exists a constant $C>0$ such that,
  $
    \pi_0(\theta_j | \lambda) \ge C {|\theta_j|^{-3}}.
  $
\end{lemma}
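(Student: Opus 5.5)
The plan is to work coordinatewise, since the joint prior in the $\ell_1$ example factorizes over $j$ given $\lambda$. Marginalizing $u_j$ gives
\(
\pi_0(\theta_j\mid\lambda)\;\propto\;\int_{-\lambda}^{\lambda} \exp\{-\alpha(\lambda-|u_j|)|\theta_j|\}\,\{1+(\theta_j+u_j)^2\}^{-1}\,du_j ,
\)
possibly restricted to $u_j\theta_j>0$. The normalizing constant is finite and does not depend on $\theta_j$: for fixed $u_j$ the integrand in $\theta_j$ is bounded by the Cauchy kernel, and $u_j$ ranges over a compact set. So it is enough to lower bound this integral by a constant times $|\theta_j|^{-3}$.

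By symmetry, I take $\theta_j=t>0$ large and keep only $u_j\in[\lambda-s,\lambda]$, where $s\in(0,\lambda)$ is a window whose width I will choose as a function of $t$. On this window two bounds hold:
\begin{itemize}
\item the exponent satisfies $\alpha(\lambda-u_j)t\le \alpha s t$;
\item the Cauchy factor satisfies $\{1+(t+u_j)^2\}^{-1}\ge \{1+(t+\lambda)^2\}^{-1}\ge c\,t^{-2}$ for $t\ge 1$.
\end{itemize}
The integral is therefore at least $s\,e^{-\alpha s t}\,c\,t^{-2}$.

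Choosing $s=\min\{\lambda,1/t\}$ keeps the exponential factor at least $e^{-\alpha}$ and gives a window width of $1/t$. This yields $\pi_0(t\mid\lambda)\ge c\,e^{-\alpha}\,t^{-3}$. Dividing by the normalizing constant gives the constant $C$, which depends on $\lambda$ and $\alpha$ but not on $t$.

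The step that needs the most care is this window choice, because it is where the power law comes from. The exponential penalty is only neutralized on a dual neighborhood of width $O(1/t)$ around $|u_j|=\lambda$, so that width contributes one factor of $t^{-1}$ on top of the $t^{-2}$ Cauchy tail. One could refine the estimate by integrating exactly: $\int_0^\lambda e^{-\alpha(\lambda-u)t}\,du=(1-e^{-\alpha\lambda t})/(\alpha t)$, which shows the rate $t^{-3}$ is sharp. For the stated lower bound, however, the crude window argument suffices.
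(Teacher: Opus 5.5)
Your proof is correct and is essentially the paper's argument: both bound the Cauchy factor below by $\{1+(|\theta_j|+\lambda)^2\}^{-1}$ and get the extra $|\theta_j|^{-1}$ from the $u_j$-integral of the exponential. The paper integrates exactly over $[-\lambda,\lambda]$ to obtain $2(1-e^{-\alpha\lambda|\theta_j|})/[\alpha|\theta_j|\{1+(|\theta_j|+\lambda)^2\}]$, whereas you use the cruder window of width $1/t$. Your use of only $u_j$ with the same sign as $\theta_j$, and your check of the normalizing constant, make the bound valid whether or not the constraint $u_j\theta_j>0$ is imposed, a point the paper leaves implicit; your sharpness aside also needs the matching upper bound $\{1+(t+u_j)^2\}^{-1}\le\{1+(t-\lambda)^2\}^{-1}$, not just the exact exponential integral.
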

\begin{proof}
  We first note that $1+ (\theta_j+u_j)^2 \le 1+ (|\theta_j|+\lambda)^2$. Therefore,
  \(
& \int \exp\left\{- \alpha (\lambda-|u_j|) |\theta_j|  \right\}
1(|u_j|\le \lambda)\{1+(\theta_j+u_j)^2\}^{-1}\,\mathrm{d}u_j \\
& \ge
\frac{1}{1+(|\theta_j|+\lambda)^2}
\int_{-\lambda}^{\lambda}
\exp\left\{- \alpha (\lambda-|u_j|) |\theta_j| \right\}\,\mathrm{d}u_j \\
& =
\frac{2}{1+(|\theta_j|+\lambda)^2}
\int_0^\lambda
\exp\left\{- \alpha (\lambda-u_j) |\theta_j| \right\}\,\mathrm{d}u_j \\
& =
\frac{2}{1+(|\theta_j|+\lambda)^2}
\int_0^\lambda
e^{-\alpha t |\theta_j|}\,\mathrm{d}t \\
& =
\frac{2\left(1-e^{-\alpha\lambda|\theta_j|}\right)}
{\alpha |\theta_j|\{1+(|\theta_j|+\lambda)^2\}}.
\)
  Clearly, the right hand side is $\Omega(|\theta_j|^{-3})$ as $|\theta_j|\to \infty$.
\end{proof}

\subsection{Proof of Theorem~\ref{thm:bregman-shrinkage-effect}}
\begin{proof}
  Letting $\eta=\nabla\psi(\beta)$, the primal is
  \(
  \min_{z\in\mathbb R^p}\;
  \psi(z)-\psi(\beta)- (z-\beta)^\top \eta + I_C(z)
  =\min_z \Big\{\psi(z)- z^\top \eta + I_C(z)\Big\}
  +\beta^\top \eta-\psi(\beta).
  \)
  Letting $h(z)=\psi(z)- z^\top \eta$, we have
  $h^*(v)
  =
  \sup_z \{ v^\top z-\psi(z)+ z^\top \eta\}
  =
  \psi^*(v+\eta),
  $
  hence
  \(
  h(z)= \sup_{v} v^\top z - h^*(v)=\sup_{v} v^\top z - \psi^*(v+\eta).
  \)
  Ignoring constant $\beta^\top \eta-\psi(\beta)$, the effective primal is
  \(
  \inf_z \sup_{v} v^\top z - \psi^*(v+\eta) + I_C(z).
  \)
  Exchanging $\inf$ and $\sup$, we have the effective dual problem:
  \(
  & \sup_{v}  \inf_z  v^\top z - \psi^*(v+\eta) + I_C(z)=
  \sup_{v} \left\{ -\sup_z  (-v^\top z - I_C(z))\right\}  - \psi^*(v+\eta) \\
  & =\sup_{v} \left\{ -\sigma_C(-v)\right\}  - \psi^*(v+\eta).
  \)
  Replacing $v$ by $-u$ and adding the constant term, we obtain the dual function as stated in the theorem.

  Next, using the generalized Pythagorean inequality in Bregman projection, we have
  $$D_\psi(z,\beta)\ge D_\psi(z,\hat z)+D_\psi(\hat z,\beta),
  \qquad z\in C.$$
  Rearranging terms and applying the weak duality, we have the stated inequality.
  \end{proof}

  \subsection{Proof of Theorem~\ref{thm:consistency-relaxed}}
  \begin{proof}
    The proof is straightforward, and we present it for completeness.
    Let $\widetilde{\Pi}$ denote the prior on $(\theta,u,\beta)$ induced by the gap-shrinkage construction, and let $\Pi$ denote its marginal prior on $\theta$. By assumption, for every $\eta>0$,
    $$
    \widetilde{\Pi}\!\left(
    \{(\theta,u,\beta): G(\theta,u,\beta)=0,\ KL(p_{\theta_0},p_\theta)<\eta\}
    \right)>0.
    $$
    Marginalizing over $u$ and $\beta$, we obtain
      $
    \Pi\!\left(\{\theta: KL(p_{\theta_0},p_\theta)<\eta\}\right)>0
    \;\text{for every }\eta>0.
    $
    Hence the marginal prior $\Pi$ assigns positive mass to every KL neighborhood of $\theta_0$.
    Since the model is assumed to satisfy the conditions of Schwartz's consistency theorem, the KL-support property above implies posterior consistency under the prior $\Pi$. Therefore, for every $\epsilon>0$,
    $
    \Pi_n\!\left(\rho(\theta,\theta_0)>\epsilon \mid Y_{1:n}\right)\stackrel{P_{\theta_0}^n}{\to} 0.
    $
    \end{proof}

  \subsection{Proof of Theorem~\ref{thm:distance-relaxed}}
  \begin{proof}
  Write the relaxed posterior as
  \(
  \Pi_n(A\mid Y_{1:n})
  =
  \frac{N_n(A)}{D_n},
  \)
  where
  \(
  N_n(A)
  =
  \iiint_A \exp\{\ell_n(\theta)\} e^{-\alpha_n G(\theta,u,\beta)}\,\Pi_0(d\theta,du,d\beta),\quad
  D_n
  =
  N_n(\Theta\times\mathcal U\times\mathbb R^p).
  \)

  \noindent\textbf{1. Lower bound the denominator.}
  Observe that
$$
D_n
=
e^{\ell_n(\theta_0)}
\iiint e^{\ell_n(\theta)-\ell_n(\theta_0)} e^{-\alpha_n G(\theta,u,\beta)}\,\Pi_0(d\theta,du,d\beta).
$$
By assumption (iii), for $(\theta,u,\beta)\in B_n$ we have
$
G(\theta,u,\beta)\le C_2\varepsilon_n^2,
$
and therefore
$
D_n
\ge
e^{\ell_n(\theta_0)} e^{-\alpha_n C_2\varepsilon_n^2}
\iiint_{B_n} e^{\ell_n(\theta)-\ell_n(\theta_0)}\,\Pi_0(d\theta,du,d\beta).
$

  By the standard marginal likelihood lower bound over Kullback--Leibler neighborhoods; see, e.g., \citet{ghosal2000convergence}, assumption (iii) implies that there exists a constant $C>0$ such that
  $$
  P_{\theta_0}^n\!\left(
  \iiint_{B_n} e^{\ell_n(\theta)-\ell_n(\theta_0)}\,\Pi_0(d\theta,du,d\beta)
  \ge e^{-C n\varepsilon_n^2}
  \right)\to 1.
  $$
  Hence,
  $$P_{\theta_0}^n\!\left(
  D_n \ge e^{\ell_n(\theta_0)}
  \exp\{-C n\varepsilon_n^2-\alpha_n C_2\varepsilon_n^2\}
  \right)\to 1.
    $$

  \noindent\textbf{2. Upper bound the numerator.}
  Consider the numerator over the large-gap set
  \(
  A_{n,M}:=\{(\theta,u,\beta)\in \mathcal F_n: G(\theta,u,\beta)>M\varepsilon_n^2\}.
  \)
  For $(\theta,u,\beta)\in A_{n,M}$, assumption (iv) gives
  $
  \ell_n(\theta)
  \le
  \ell_n(T(\beta))-\kappa_n G(\theta,u,\beta)+C_4 n\varepsilon_n^2,
  $ hence
  \(
  e^{\ell_n(\theta)} e^{-\alpha_n G(\theta,u,\beta)}
  \le
  e^{\ell_n(T(\beta))}
  e^{-(\alpha_n+\kappa_n)G(\theta,u,\beta)+C_4 n\varepsilon_n^2}\le
  e^{C_4 n\varepsilon_n^2-(\alpha_n+\kappa_n)M\varepsilon_n^2}
  e^{\ell_n(T(\beta))}.
  \)

  Therefore,
  \(
  N_n(A_{n,M})
  \le
  e^{C_4 n\varepsilon_n^2-(\alpha_n+\kappa_n)M\varepsilon_n^2}
  \iiint_{A_{n,M}} e^{\ell_n(T(\beta))}\,\Pi_0(d\theta,du,d\beta).
  \)

  Now integrate out $(\theta,u)$ conditionally on $\beta$. Since the integrand depends on $(\theta,u)$ only through the indicator of $A_{n,M}$ and the prior,
  \(
  N_n(A_{n,M})
  \le
  e^{C_4 n\varepsilon_n^2-(\alpha_n+\kappa_n)M\varepsilon_n^2}
  \int e^{\ell_n(T(\beta))}\,m_n(\beta)\,\Pi_0^\beta(d\beta),
  \)
  where
  $
  m_n(\beta):=\Pi_0\bigl(\{(\theta,u):(\theta,u,\beta)\in A_{n,M}\}\mid \beta\bigr)\le 1.
  $
  Hence
  \(
  N_n(A_{n,M})
  \le
  e^{C_4 n\varepsilon_n^2-(\alpha_n+\kappa_n)M\varepsilon_n^2}
  \int e^{\ell_n(T(\beta))}\,\Pi_0^\beta(d\beta).
  \)

  Rewriting the last integral
  $$
  N_n(A_{n,M})
  \le
  e^{\ell_n(\theta_0)}
  e^{C_4 n\varepsilon_n^2-(\alpha_n+\kappa_n)M\varepsilon_n^2}
  \int e^{\ell_n(T(\beta))-\ell_n(\theta_0)}\,\Pi_0^\beta(d\beta).
  $$
Since
  $
  E_{\theta_0}\!\left[
  \int e^{\ell_n(T(\beta))-\ell_n(\theta_0)}\,\Pi_0^\beta(d\beta)
  \right]
  =
  \int E_{\theta_0}\!\left[
  \prod_{i=1}^n \frac{p_{T(\beta)}(Y_i)}{p_{\theta_0}(Y_i)}
  \right]\Pi_0^\beta(d\beta)
  =
  1.
  $
  Hence, by Markov's inequality, for every fixed $C'>0$,
  $
  P_{\theta_0}^n\!\left(
  \int e^{\ell_n(T(\beta))-\ell_n(\theta_0)}\,\Pi_0^\beta(d\beta)
  >
  e^{C' n\varepsilon_n^2}
  \right)
  \le
  e^{-C' n\varepsilon_n^2}.
  $
  Therefore, with high $P_{\theta_0}^n$-probability,
  $
  N_n(A_{n,M})
  \le
  e^{\ell_n(\theta_0)}
  \exp\!\left\{
  (C_4+C')n\varepsilon_n^2-(\alpha_n+\kappa_n)M\varepsilon_n^2
  \right\}.
  $

  \noindent\textbf{3. Combining the numerator and denominator bounds.}
We now have, with high $P_{\theta_0}^n$-probability,
  $$
  \Pi_n(A_{n,M}\mid Y_{1:n})
  \le
  \exp\!\left\{
  (C_4+C'+C)n\varepsilon_n^2
  +\alpha_n C_2\varepsilon_n^2
  -(\alpha_n+\kappa_n)M\varepsilon_n^2
  \right\}.
  $$
  Thus, if $M$ is sufficiently large, by condition (v)
  $
  \liminf_{n\to\infty} (\alpha_n+\kappa_n) > 0,
  $
  then
  $
  \Pi_n(A_{n,M}\mid Y_{1:n})\to 0.
  $

  Since
$
\{G(\theta,u,\beta)>M\varepsilon_n^2\}
\subseteq
A_{n,M}\cup \mathcal F_n^c,
$
we have
$
\Pi_n\bigl(G(\theta,u,\beta)>M\varepsilon_n^2 \mid Y_{1:n}\bigr)
\le
\Pi_n(A_{n,M}\mid Y_{1:n})
+
\Pi_n(\mathcal F_n^c\mid Y_{1:n}).
$
The first term tends to zero by the preceding bound, while the second tends to zero by assumption (ii). Hence
$
\Pi_n\bigl(G(\theta,u,\beta)>M\varepsilon_n^2 \mid Y_{1:n}\bigr)\to 0.
$
  \end{proof}

  \subsection{Proof of Theorem~\ref{thm:beta-transfer}}
\begin{proof}
Write
\(
\Pi_n^{\mathrm{rel},\beta}(d\beta\mid Y_{1:n})
=
\frac{m_n^{\mathrm{rel}}(\beta)\,\Pi_0^\beta(d\beta)}
{\int m_n^{\mathrm{rel}}(\beta)\,\Pi_0^\beta(d\beta)}
\quad\text{ and }\quad
\Pi_n^{\mathrm{ex},\beta}(d\beta\mid Y_{1:n})
=
\frac{e^{\ell_n(T(\beta))}\,\Pi_0^\beta(d\beta)}
{\int e^{\ell_n(T(\beta))}\,\Pi_0^\beta(d\beta)}.
\)
On the set $\mathcal B_n$, the assumption implies
\(
m_n^{\mathrm{rel}}(\beta)
=
c_n e^{\ell_n(T(\beta))}\{1+r_n(\beta)\},
\qquad
\sup_{\beta\in\mathcal B_n}|r_n(\beta)|\le \delta_n,
\)
with $P_{\theta_0}^n$-probability tending to one. Since $c_n$ does not depend on $\beta$, it cancels after normalization. Therefore, on $\mathcal B_n$, the two posterior densities differ only by $1+r_n(\beta)$, uniformly convergent to one. Therefore,
\(
& \left\|
\Pi_n^{\mathrm{rel},\beta}(\cdot\mid Y_{1:n})
-
\Pi_n^{\mathrm{ex},\beta}(\cdot\mid Y_{1:n})
\right\|_{\mathrm{TV}}\\
& \le
\left\|
\Pi_n^{\mathrm{rel},\beta}(\,\cdot\,\cap \mathcal B_n\mid Y_{1:n})
-
\Pi_n^{\mathrm{ex},\beta}(\,\cdot\,\cap \mathcal B_n\mid Y_{1:n})
\right\|_{\mathrm{TV}}
+
\Pi_n^{\mathrm{rel},\beta}(\mathcal B_n^c\mid Y_{1:n})
+
\Pi_n^{\mathrm{ex},\beta}(\mathcal B_n^c\mid Y_{1:n})\\
& \stackrel{P_{\theta_0}^n}{\longrightarrow} 0.
\)

For the final claim, let
$
A_{n,M}:=\{\beta:\rho(T(\beta),\theta_0)>M\varepsilon_n\}.
$
Then
\(
\left|
\Pi_n^{\mathrm{rel},\beta}(A_{n,M}\mid Y_{1:n})
-
\Pi_n^{\mathrm{ex},\beta}(A_{n,M}\mid Y_{1:n})
\right|
\le
\left\|
\Pi_n^{\mathrm{rel},\beta}(\cdot\mid Y_{1:n})
-
\Pi_n^{\mathrm{ex},\beta}(\cdot\mid Y_{1:n})
\right\|_{\mathrm{TV}},
\)
which converges to zero in $P_{\theta_0}^n$-probability. Hence contraction rate of $T(\beta)$ under the exact projected posterior transfers to the relaxed posterior.
\end{proof}

\subsection{Sufficient conditions for the $\beta$-marginal convergence}

We now proceed to give a set of sufficient conditions for the $\beta$-marginal convergence in Theorem~\ref{thm:beta-transfer}. The content is that the relaxed inner integral over $(\theta,u)$ admits a local Laplace approximation around the zero-gap point $(T(\beta),u^*(\beta))$.

  \begin{theorem}\label{prop:laplace-comparison}
  Let
  \(
  J_n(\beta)
  =
  \iint \exp\{\ell_n(\theta)-\ell_n(T(\beta))\}\exp\{-\alpha_n G(\theta,u,\beta)\}\,d\theta\,du.
  \)
  Suppose there exists a measurable set $\mathcal B_n\subseteq \mathbb R^p$ such that
  $
  \Pi_n^{\mathrm{ex},\beta}(\mathcal B_n\mid Y_{1:n})\stackrel{P_{\theta_0}^n}{\longrightarrow} 1,
  $
   and the following conditions hold uniformly over $\beta\in\mathcal B_n$.

  \begin{enumerate}
\renewcommand{\labelenumi}{(\roman{enumi})}
  \item There is a unique zero-gap point
  $
  z^*(\beta)=(T(\beta),u^*(\beta))
  $
  with
  $
  G(z^*(\beta),\beta)=0.
  $

  \item There exists a positive definite matrix $H_\beta$ such that, for every fixed $M>0$,
  \(
  \sup_{\|h\|\le M/\sqrt{\alpha_n}}
  \left|
  G(z^*(\beta)+h,\beta)-\frac12 h^\top H_\beta h
  \right|
  =
  o(\alpha_n^{-1}).
  \)

  \item The Hessians are uniformly nondegenerate and asymptotically constant:
  $
  0<\lambda_{\min}\le \lambda_{\min}(H_\beta)\le \lambda_{\max}(H_\beta)\le \lambda_{\max}<\infty,
  $
  and there exists a deterministic sequence $d_n>0$ such that
  $
  \det(H_\beta)^{-1/2}=d_n\{1+o(1)\}.
  $

  \item For every fixed $M>0$,
  \(
  \sup_{\beta\in\mathcal B_n}
  \sup_{\|\theta-T(\beta)\|\le M/\sqrt{\alpha_n}}
  |\ell_n(\theta)-\ell_n(T(\beta))|
  \stackrel{P_{\theta_0}^n}{\longrightarrow} 0.
  \)
  \item There exists $c_0>0$ such that, for all sufficiently large $n$,
  $$
  G(z^*(\beta)+h,\beta)\ge c_0\|h\|^2
  $$
  whenever $\|h\|$ is sufficiently small, uniformly in $\beta\in\mathcal B_n$. Moreover, for every $\eta>0$,
  $$
  \lim_{M\to\infty}\,\limsup_{n\to\infty}
  P_{\theta_0}^n\!\left(
  \sup_{\beta\in\mathcal B_n}
  \alpha_n^{m/2}J_{n,2}(\beta;M)
  \ge \eta
  \right)=0,
  $$
  with $J_{n,2}(\beta;M)$ a tail integral defined as in the proof.
  \end{enumerate}

  Then there exists a deterministic sequence
  $
  c_n=(2\pi)^{m/2}\alpha_n^{-m/2}d_n
  $
  such that
  \(
  \sup_{\beta\in\mathcal B_n}
  \left|
  \frac{J_n(\beta)}{c_n}-1
  \right|
  \stackrel{P_{\theta_0}^n}{\longrightarrow} 0 \; \Leftrightarrow \; \sup_{\beta\in\mathcal B_n}
  \left|
  \frac{m_n^{\mathrm{rel}}(\beta)}
  {c_n e^{\ell_n(T(\beta))}}-1
  \right|
  \stackrel{P_{\theta_0}^n}{\longrightarrow} 0.
  \)
  \end{theorem}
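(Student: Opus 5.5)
The claim is a Laplace-type approximation of the inner integral $J_n(\beta)$ around the zero-gap point. The stated equivalence is immediate, since
\[
m_n^{\mathrm{rel}}(\beta)=e^{\ell_n(T(\beta))}J_n(\beta)
\]
by definition. So it suffices to prove the uniform statement for $J_n(\beta)/c_n$. Throughout, $m$ is the dimension of the integration variable $(\theta,u)$, which I will write as $z=z^*(\beta)+h$.

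First I split the integral at radius $M/\sqrt{\alpha_n}$:
\[
J_n(\beta)=J_{n,1}(\beta;M)+J_{n,2}(\beta;M),
\]
where $J_{n,1}$ is the integral over $\|h\|\le M/\sqrt{\alpha_n}$ and $J_{n,2}$ is the complementary tail integral. This is the tail integral referenced in condition (v).

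On the inner region, condition (ii) gives
\[
\alpha_n G(z^*+h,\beta)=\tfrac{\alpha_n}{2}h^\top H_\beta h+o(1)
\]
uniformly. Condition (iv) gives
\[
\ell_n(\theta)-\ell_n(T(\beta))=o_P(1)
\]
uniformly in $\beta\in\mathcal B_n$, since $\|\theta-T(\beta)\|\le\|h\|$. Hence, with probability tending to one,
\[
J_{n,1}=\{1+o(1)\}\int_{\|h\|\le M/\sqrt{\alpha_n}}e^{-\alpha_n h^\top H_\beta h/2}\,dh .
\]
Substituting $h=s/\sqrt{\alpha_n}$, this equals
\[
\alpha_n^{-m/2}\{1+o(1)\}\int_{\|s\|\le M}e^{-s^\top H_\beta s/2}\,ds .
\]
The Gaussian integral over all of $\mathbb R^m$ is $(2\pi)^{m/2}\det(H_\beta)^{-1/2}$. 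By the uniform eigenvalue bounds in (iii), the part of it outside $\|s\|>M$ is at most $\epsilon_M\to0$ relative to the total, uniformly in $\beta$. Combining this with $\det(H_\beta)^{-1/2}=d_n\{1+o(1)\}$ from (iii) gives
\[
\left|\frac{J_{n,1}}{c_n}-1\right|\le o_P(1)+\epsilon_M
\]
uniformly over $\mathcal B_n$.

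For the tail, I use $c_n\asymp\alpha_n^{-m/2}d_n$, and by (iii) $d_n$ is bounded away from $0$. So $J_{n,2}/c_n\lesssim\alpha_n^{m/2}J_{n,2}$, which the second part of (v) makes small in probability once $M$ is large. The quadratic lower bound $G\ge c_0\|h\|^2$ in (v) is what one would use to verify that tail condition in the intermediate annulus. I would note this but rely on the assumed limit directly.

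To conclude, fix $\eta>0$, choose $M$ large so that both $\epsilon_M$ and the tail probability are below $\eta$, and then let $n\to\infty$. This gives $\sup_{\beta\in\mathcal B_n}|J_n/c_n-1|\to0$ in $P^n_{\theta_0}$-probability.

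The main obstacle is uniformity in $\beta$. The $o(1)$ terms from (ii) and (iv) must be uniform so that $e^{o(1)}$ can be pulled out of the integral. The normalization by the $\beta$-independent $d_n$ requires both the determinant condition and the uniform eigenvalue bounds in (iii). I would handle this by writing all errors as sup-over-$\mathcal B_n$ quantities before integrating.
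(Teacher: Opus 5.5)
Your proposal is correct and follows the paper's proof essentially step for step. The shared steps are: the same split of $J_n(\beta)$ at radius $M/\sqrt{\alpha_n}$, the Gaussian approximation of $J_{n,1}$ via (ii) and (iv), the rescaling $h=s/\sqrt{\alpha_n}$, direct use of the assumed limit in (v) for $J_{n,2}$, and normalization by $d_n$ via (iii).

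You also make explicit several points the paper leaves implicit:
\begin{itemize}
\item the identity $m_n^{\mathrm{rel}}(\beta)=e^{\ell_n(T(\beta))}J_n(\beta)$, which gives the stated equivalence;
\item the uniform-in-$\beta$ Gaussian tail bound obtained from $\lambda_{\min}$;
\item the fact that $d_n$ is bounded below via $\lambda_{\max}$, which is what turns the absolute error $\eta\,\alpha_n^{-m/2}$ into a relative one.
\end{itemize}
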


      \begin{proof}
        Fix $\beta\in\mathcal B_n$, and write
        $
        z=(\theta,u),
        \;
        z^*=z^*(\beta)=(T(\beta),u^*(\beta)).
        $
        Then
        $
        J_n(\beta)
        =
        \int \exp\{\ell_n(\theta)-\ell_n(T(\beta))\}e^{-\alpha_n G(z,\beta)}\,dz.
        $

        Make the change of variable
        $
        h=z-z^*.
        $
        Then
        \(
        & J_n(\beta)
        =
        \int \exp\{\ell_n(T(\beta)+h_\theta)-\ell_n(T(\beta))\}
        e^{-\alpha_n G(z^*+h,\beta)}\,dh,\\
        & = \underbrace{\int_{\|h\|\le M/\sqrt{\alpha_n}}
        \exp\{\ell_n(T(\beta)+h_\theta)-\ell_n(T(\beta))\}
        e^{-\alpha_n G(z^*+h,\beta)}\,dh}_{J_{n,1}(\beta;M)}\\
        & \qquad +     \underbrace{\int_{\|h\|> M/\sqrt{\alpha_n}}
        \exp\{\ell_n(T(\beta)+h_\theta)-\ell_n(T(\beta))\}
        e^{-\alpha_n G(z^*+h,\beta)}\,dh}_{J_{n,2}(\beta;M)}.
        \)
        where $h_\theta$ denotes the $\theta$-component of $h$.

        Fix $\eta>0$. By assumption (v), there exists $M<\infty$ sufficiently large such that
        $$
        \sup_{\beta\in\mathcal B_n} J_{n,2}(\beta;M)
        \le \eta\,\alpha_n^{-m/2}
        $$
        with $P_{\theta_0}^n$-probability tending to one. For this fixed $M$, it remains to approximate $J_{n,1}(\beta;M)$.

        On the set $\|h\|\le M/\sqrt{\alpha_n}$, assumption (iv) implies
        $$
        \sup_{\beta\in\mathcal B_n}
        \sup_{\|h\|\le M/\sqrt{\alpha_n}}
        \left|
        \exp\{\ell_n(T(\beta)+h_\theta)-\ell_n(T(\beta))\}-1
        \right|
        \to 0
        $$
        in $P_{\theta_0}^n$-probability. Also, by assumption (ii),
        $$
        \sup_{\beta\in\mathcal B_n}
        \sup_{\|h\|\le M/\sqrt{\alpha_n}}
        \left|
        \alpha_n G(z^*(\beta)+h,\beta)
        -\frac12 (\sqrt{\alpha_n}h)^\top H_\beta (\sqrt{\alpha_n}h)
        \right|
        \to 0
        $$
        as $n\to\infty$. Therefore,
        \(
        \sup_{\beta\in\mathcal B_n}
        \left|
        J_{n,1}(\beta;M)
        -
        \int_{\|h\|\le M/\sqrt{\alpha_n}}
        \exp\!\left\{-\frac{\alpha_n}{2}h^\top H_\beta h\right\}\,dh
        \right|
        =
        o_{P_{\theta_0}^n}(\alpha_n^{-m/2}).
        \)

        Now set
        $
        t=\sqrt{\alpha_n}\,h.
        $
        Then
        $$
        \sup_{\beta\in\mathcal B_n}
        \left|
        J_{n,1}(\beta;M)
        -
        \alpha_n^{-m/2}
        \int_{\|t\|\le M}
        \exp\!\left\{-\frac12 t^\top H_\beta t\right\}\,dt
        \right|
        =
        o_{P_{\theta_0}^n}(\alpha_n^{-m/2}).
        $$
        uniformly over $\beta\in\mathcal B_n$.

        Combining this with the tail bound gives
        $$
        \sup_{\beta\in\mathcal B_n}
        \left|
        J_n(\beta)
        -
        \alpha_n^{-m/2}
        \int_{\|t\|\le M}
        \exp\!\left\{-\frac12 t^\top H_\beta t\right\}\,dt
        \right|
        \le
        \eta\,\alpha_n^{-m/2}
        +
        o_{P_{\theta_0}^n}(\alpha_n^{-m/2}).
        $$
        Since $\eta>0$ was arbitrary, letting $M\to\infty$ in the Gaussian integral and using assumption (iii), we conclude that
        $$
        J_n(\beta)
        =
        \alpha_n^{-m/2}(2\pi)^{m/2}\det(H_\beta)^{-1/2}
        \{1+o_{P_{\theta_0}^n}(1)\}
        $$
        uniformly over $\beta\in\mathcal B_n$.

        Finally, since
        $
        \det(H_\beta)^{-1/2}=d_n\{1+o(1)\}
        $
        uniformly over $\beta\in\mathcal B_n$, it follows that
        $$
        J_n(\beta)
        =
        (2\pi)^{m/2}\alpha_n^{-m/2}d_n\{1+o_{P_{\theta_0}^n}(1)\}
        $$
        uniformly over $\beta\in\mathcal B_n$.
        This proves the claim.      \end{proof}

        \begin{remark}
          In the above result, the regime $\alpha_n\to\infty$ is used to induce localization of the relaxed inner integral on the scale $\alpha_n^{-1/2}$ around the zero-gap point $(T(\beta),u^*(\beta))$.  In principle, the approximation
          $
          \sup_{\beta\in\mathcal B_n}
          \left|
          \frac{m_n^{\mathrm{rel}}(\beta)}
          {c_n e^{\ell_n(T(\beta))}}-1
          \right|
          \stackrel{P_{\theta_0}^n}{\longrightarrow} 0
          $
          may still hold without requiring $\alpha_n\to\infty$, but we conjecture that it may hold under different conditions ensuring that the relaxed inner integral is already concentrated near $(T(\beta),u^*(\beta))$ at a fixed scale and that the likelihood is sufficiently flat on that scale. We leave this as a potential future work.
          \end{remark}

          \subsection{Proof of Theorem~\ref{cor:relaxed-posterior-rate}}
          \begin{proof}
            By Theorem~\ref{thm:distance-relaxed}, the posterior duality gap is of order $\varepsilon_n^2$, and therefore Theorem~\ref{thm:shrinkage-effect} implies that
            $$
            \|\theta-T(\beta)\|_2=O_{P_{\theta_0}^n}(\varepsilon_n)
            $$
            under the relaxed posterior. By Theorem~\ref{thm:beta-transfer}, contraction of $T(\beta)$ under the exact projected posterior transfers to the relaxed posterior, so
            $$
            \Pi_n^{\mathrm{rel},\beta}\bigl(\rho(T(\beta),\theta_0)>M\varepsilon_n\mid Y_{1:n}\bigr)\stackrel{P_{\theta_0}^n}{\longrightarrow} 0
            $$
            for every sufficiently large $M>0$. The conclusion then follows from
            $$
            \rho(\theta,\theta_0)\le C\|\theta-T(\beta)\|_2+\rho(T(\beta),\theta_0).
            $$
            \end{proof}

            \section{Simulation on low-rank and sparse matrix smoothing}
We next consider a low-rank and sparse matrix smoothing problem, where the goal is to estimate a matrix $\theta\in\mathbb{R}^{p_1\times p_2}$ that is both low-rank and elementwise sparse, as in robust PCA. Projection onto this structure can be formulated via the proximal mapping with $g(\theta)=\lambda_1\|\theta\|_*+\lambda_2\|\theta\|_1$, where $\|\cdot\|_1$ is the elementwise $\ell_1$ norm and $\|\cdot\|_*$ is the nuclear norm. However, solving this projection is computationally intensive, requiring repeated SVDs and iterative algorithms like ADMM, which motivates our use of relaxed projected priors.

Combining the results developed from Examples 4 and 5, we have the variational gap
\(
    \tilde G(\theta=AB^\top,V_1,V_2;\beta)
    = \frac{\lambda_1}{2}\bigl(\|A\|_F^2+\|B\|_F^2\bigr)
      + \lambda_2\|AB^\top\|_1
      - \langle V_1+V_2,\,AB^\top\rangle,
  \)
  with both $V_1,V_2\in\mathbb{R}^{p_1\times p_2}$ and $\|V_1\|_{\mathrm{op}}\le\lambda_1$ and $|(V_2)_{ij}|\le\lambda_2$ for all $i,j$; $A\in \mathbb{R}^{p_1\times r}$ and $B\in \mathbb{R}^{p_2\times r}$ for some $r\ge 1$. Under sufficiently large $\lambda_1$, we know the exact projection $\theta$ has a low rank, hence allowing us to use $r$ smaller than $p_1$ and $p_2$.

To simulate data, we use $p_1=50$, $p_2=40$, and first construct a ground-truth matrix $\theta_0\in\mathbb{R}^{p_1\times p_2}$ of rank $3$ and elementwise sparsity $96.2\%$: three rank-one blocks of amplitudes $10$, $7$, and $4$ are placed at non-overlapping $5\times 5$ submatrices, and all other entries are zero. The data are drawn i.i.d. via
\(
  Y_s = \theta_0 + \varepsilon_s, \qquad \varepsilon_s\sim\mathrm{N}(0,\sigma^2 I_{p_1}\otimes I_{p_2}),
\)
for $s=1,\ldots,S$ with $S=100$, $\sigma=0.3$. We use matrix Gaussian likelihood consistent with the data simulation, and assign inverse-gamma prior for $\sigma^2$, and relaxed projection prior for the matrix $\theta$ with dimension $r=5$, and elements $\beta_{ij}\sim \mathrm{N}(0,100)$. For the operator norm constraint on $V_1$, we enforce it by simply reparameterizing $\lambda_1=\|V_1\|_F\ge \|V_1\|_{\mathrm{op}}$.

We run $3000$ warmup and $3000$ sampling iterations.
Figure~\ref{fig:nuclear-gibbs} shows the posterior mean of the estimated $\theta$, and the posterior distribution of the singular values $\rho_k(\theta)$ of $\theta$. The posterior singular values $\rho_k(\theta)$ concentrate tightly around the true values for $k=1,2,3$, while the posterior for $k\ge 4$ collapses to near zero, reflecting automatic rank selection. The noise variance $\sigma^2$ is recovered accurately, with posterior mean $0.092$. The results show that the gap-shrinkage prior handles simultaneous constraints and correctly identifies matrix rank and sparsity.

In terms of the computational costs, the Gibbs sampler for the gap-shrinkage prior took $29$ seconds. For comparison, we also ran the model with the exact projected prior, using ADMM to solve the projection step. This approach required $4$ hours to complete $6{,}000$ iterations, which is substantially slower than the gap-shrinkage Gibbs sampler.

\begin{figure}[H]
  \centering
  \begin{subfigure}[b]{0.45\textwidth}
    \centering
    \includegraphics[width=1\linewidth, height=5cm, trim=0 0 0 30, clip]{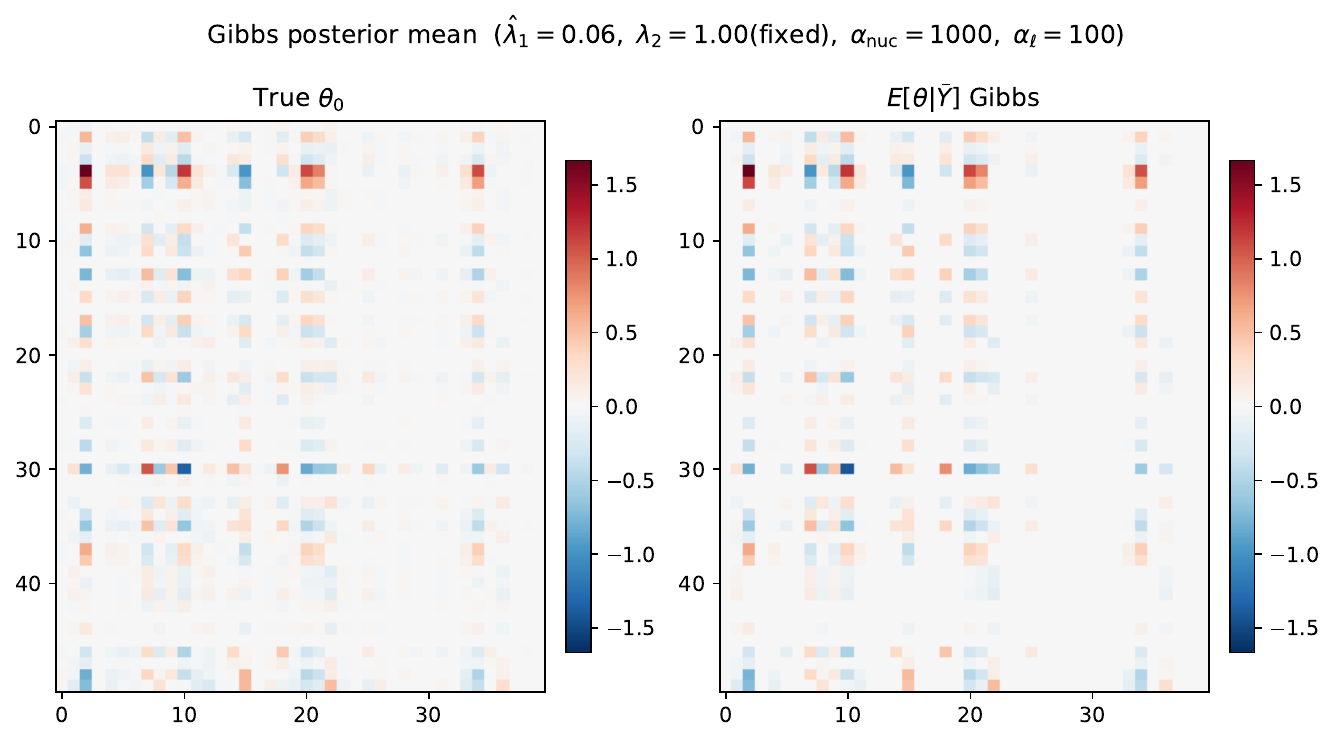}
    \caption{Ground truth $\theta_0$, and posterior mean of $\theta$ under gap-shrinkage prior.}
  \end{subfigure}
  \begin{subfigure}[b]{0.45\textwidth}
    \centering
    \includegraphics[width=1\linewidth, height=5cm, trim=0 0 0 20, clip]{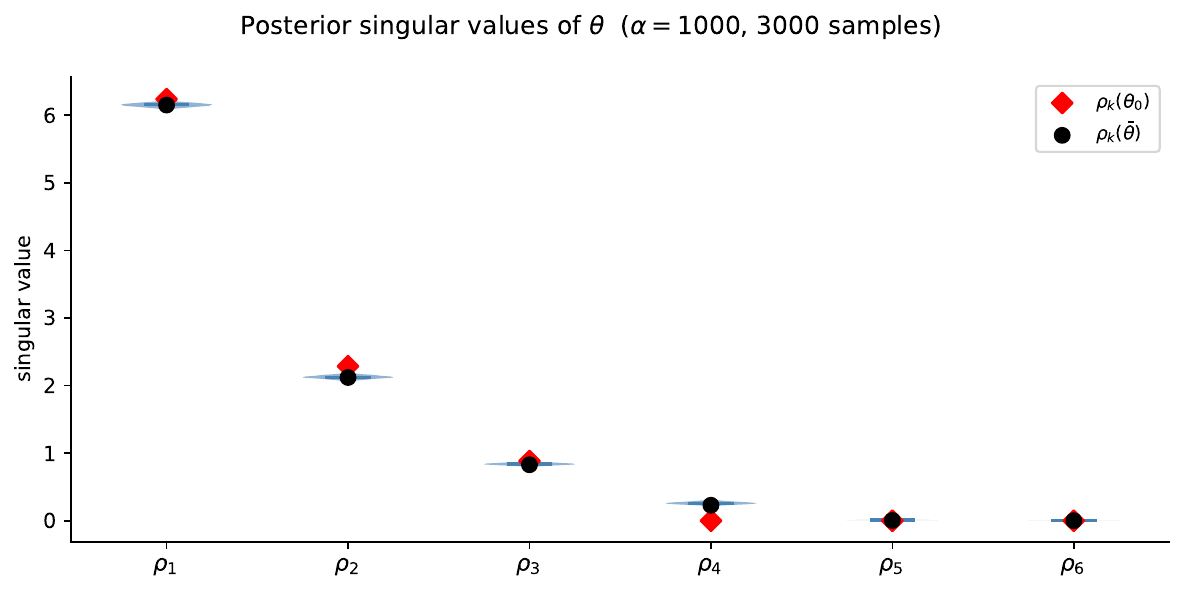}
    \caption{Posterior distribution of singular values $\rho_k(\theta)$ for $k=1,\ldots,6$.}
  \end{subfigure}
  \caption{Results for low-rank and sparse matrix smoothing.}
  \label{fig:nuclear-gibbs}
\end{figure}

\section{Additional details for the data application}
We performed a 5-fold cross-validation using the same three models. The pooled AUCs are $0.832$ for gap-shrinkage, $0.829$ for the Bayesian fused-lasso prior, and $0.821$ for the independent Gaussian prior. The foldwise distributions in Figure \ref{fig:nielsen-gap-roc-compare}(c) show that out-of-sample differences are modest, but the gap-shrinkage prior remains slightly better overall while avoiding the over-smoothing behavior of the Bayesian fused-lasso fit and high dispersion in coefficients produced by the independent Gaussian prior with little pooling.

\begin{figure}[H]
\centering
\begin{subfigure}[b]{0.45\textwidth}
    \centering
    \includegraphics[width=1\textwidth]{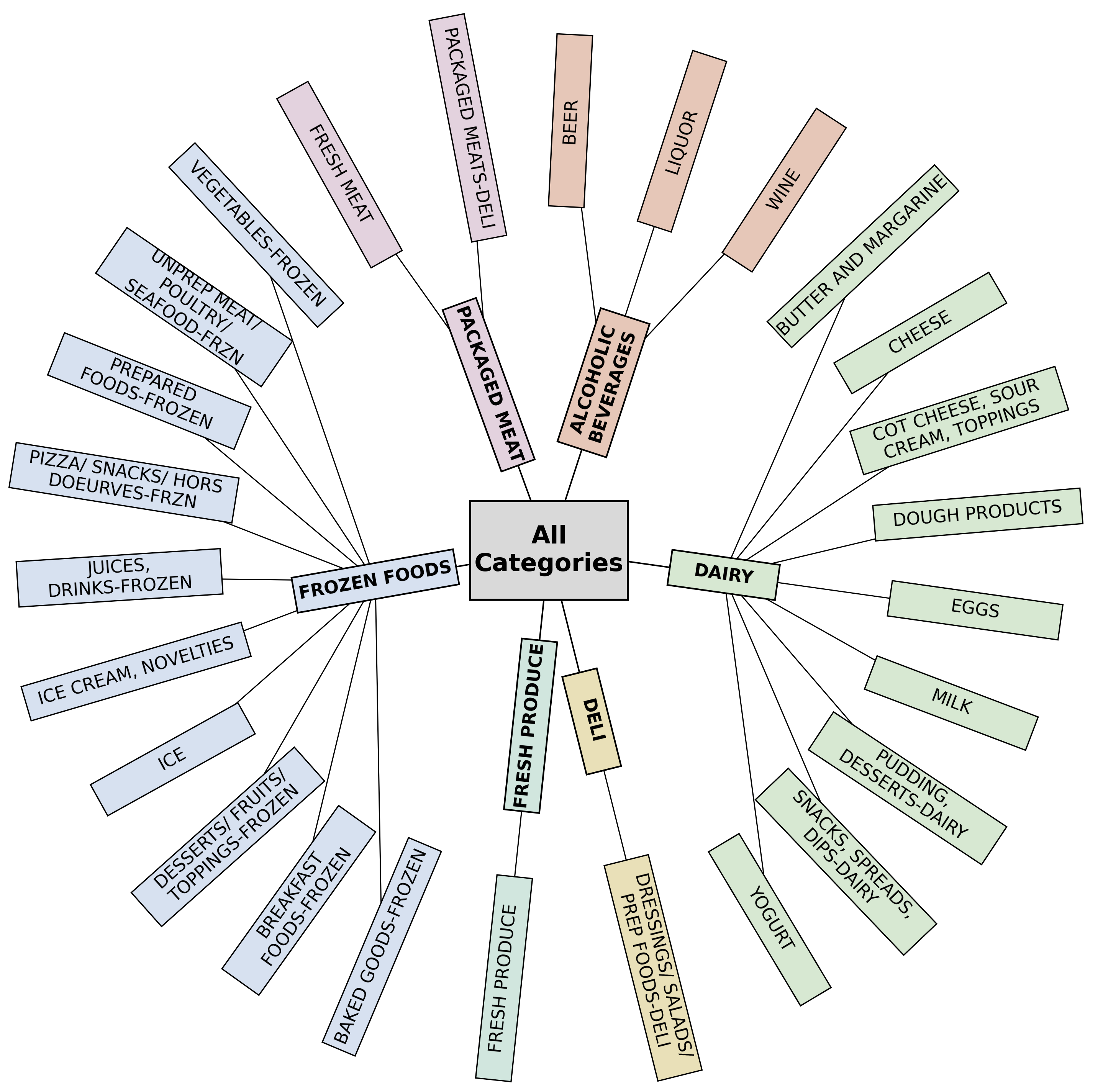}
    \caption{Taxonomy of the 26 subcategories in grocery.}
    \label{fig:nielsen-taxonomy}
  \end{subfigure}
\begin{subfigure}[b]{0.45\textwidth}
  \includegraphics[width=1\textwidth]{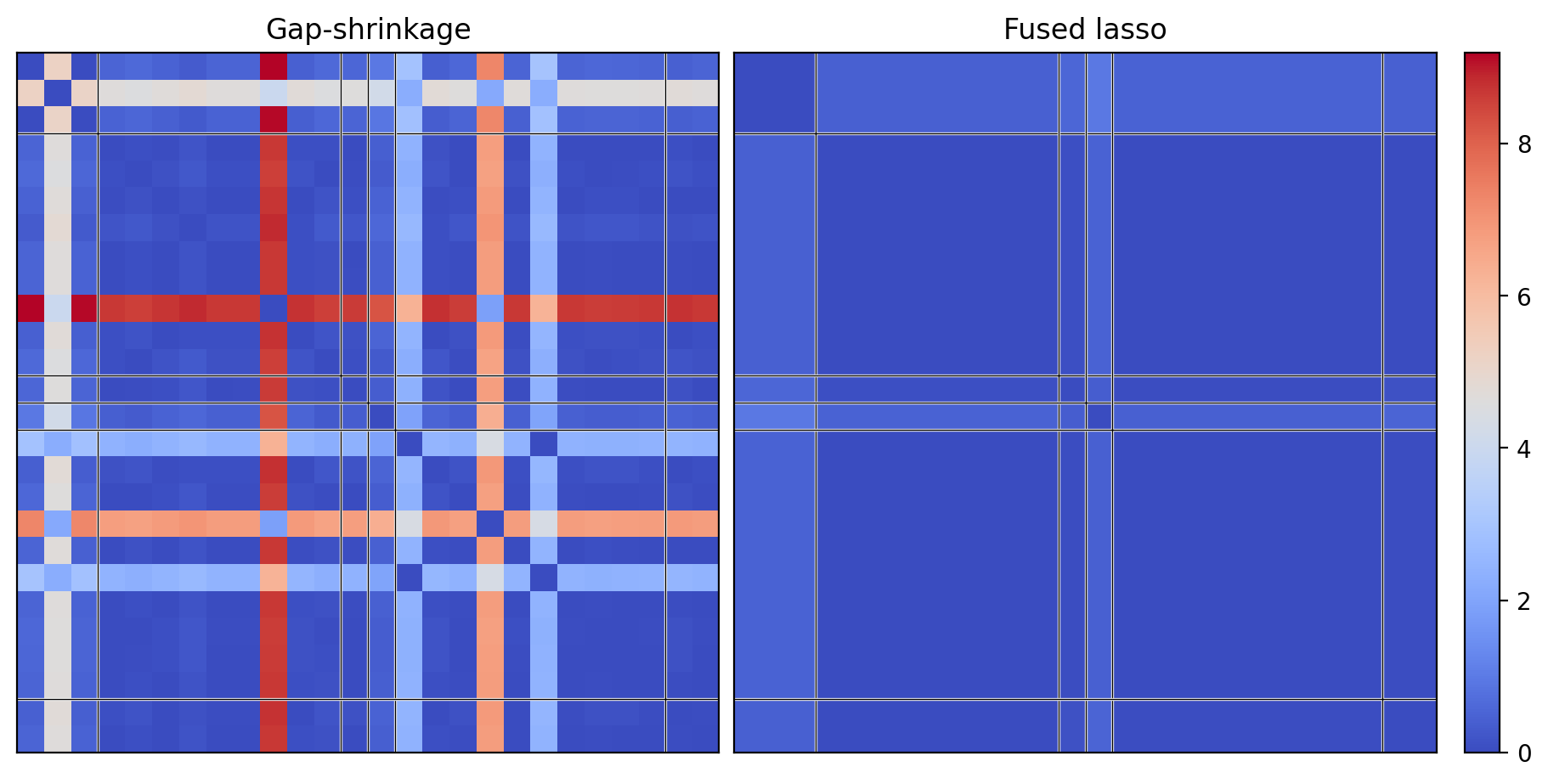}
  \caption{Absolute pairwise posterior mean differences for the total quantity coefficient.}
  \label{fig:nielsen-gap-vs-fused}
\end{subfigure}\\
\begin{subfigure}[b]{0.4\textwidth}
  \includegraphics[width=\textwidth,height=3cm]{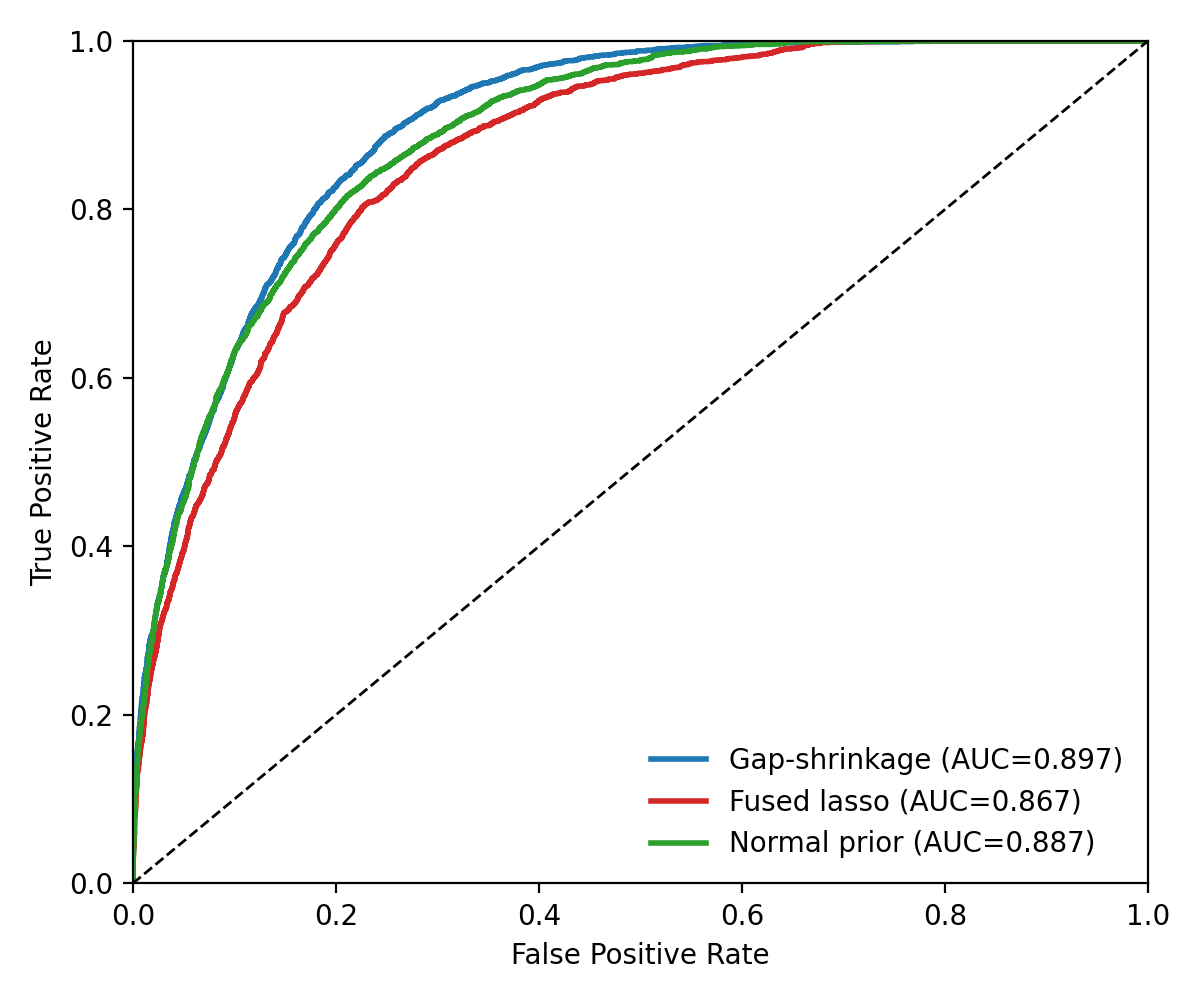}
  \caption{ROC curves based on posterior mean probit scores.}
  \label{fig:nielsen-roc}
\end{subfigure}
\begin{subfigure}[b]{0.4\textwidth}
  \centering
  \includegraphics[width=\textwidth,height=3cm]{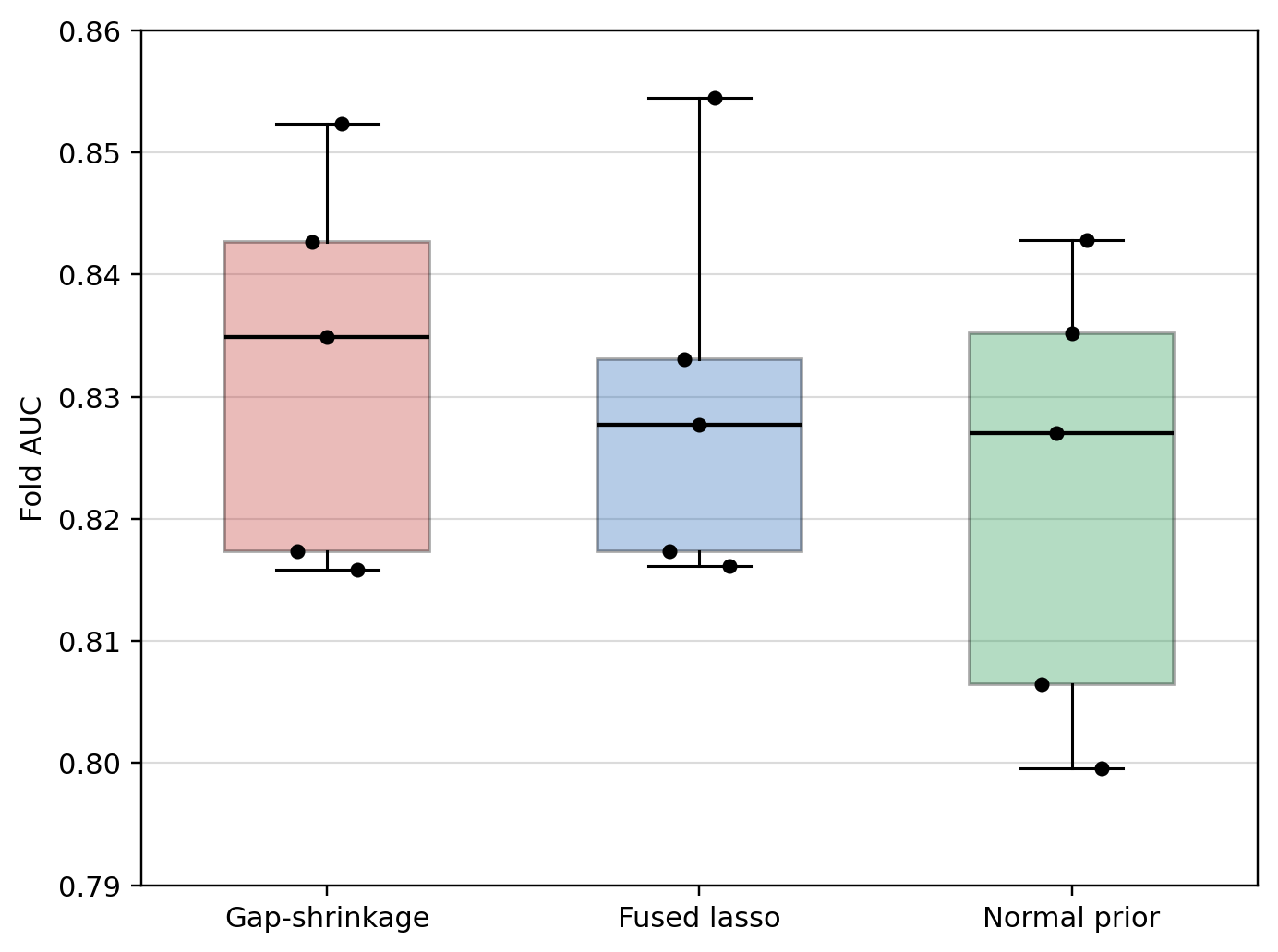}
  \caption{Boxplots of foldwise AUC values from the five-fold cross-validated comparison.}
\end{subfigure}
\caption{Comparison of the gap-shrinkage model with the Bayesian fused-lasso model and the independent Gaussian prior model on $\theta$.}
\label{fig:nielsen-gap-roc-compare}
\end{figure}

\begin{figure}[H]
\begin{subfigure}[t]{0.45\textwidth}
\centering
\includegraphics[width=\linewidth]{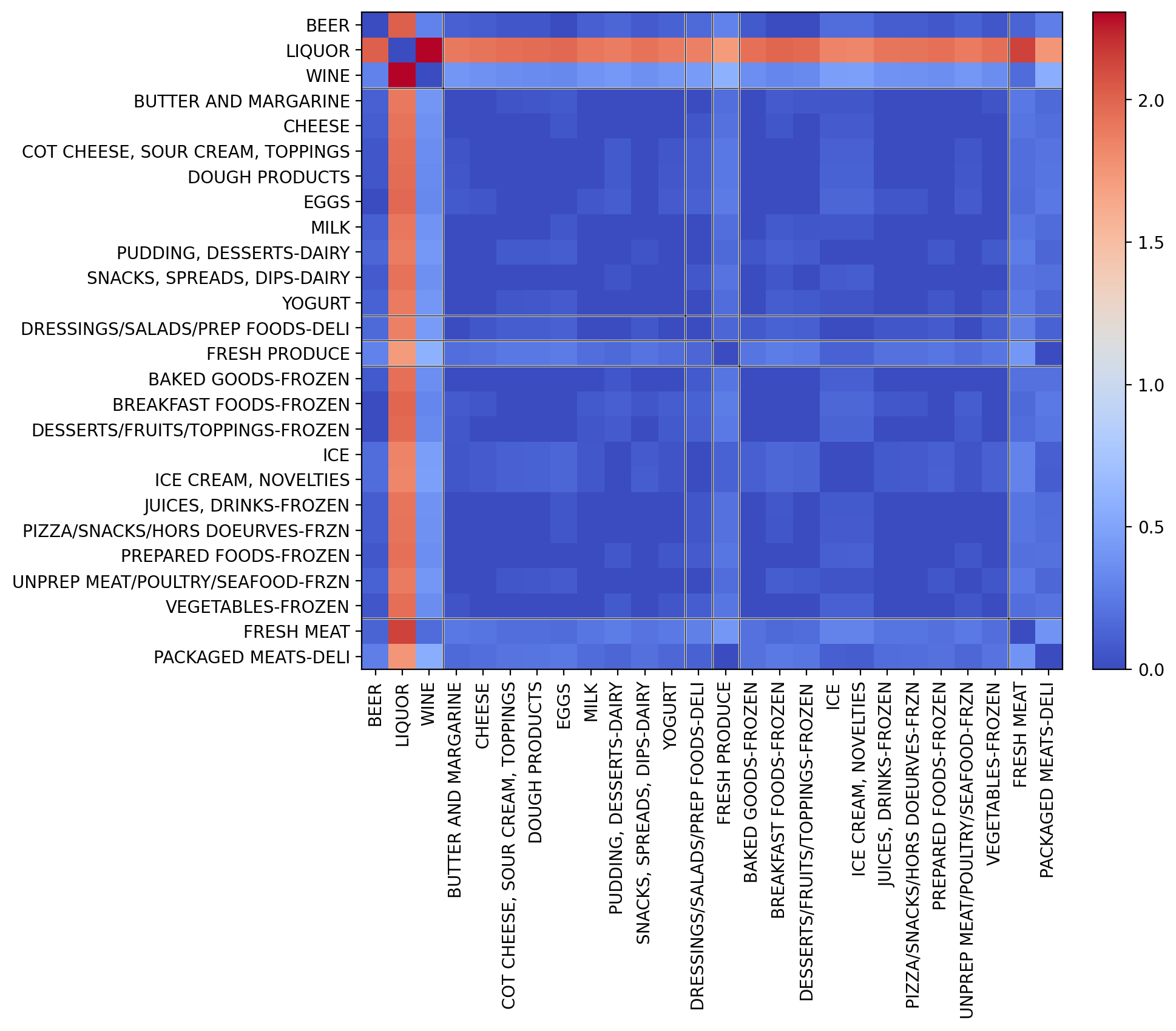}
\caption{$Q_2$}
\end{subfigure}
\begin{subfigure}[t]{0.45\textwidth}
\centering
\includegraphics[width=\linewidth]{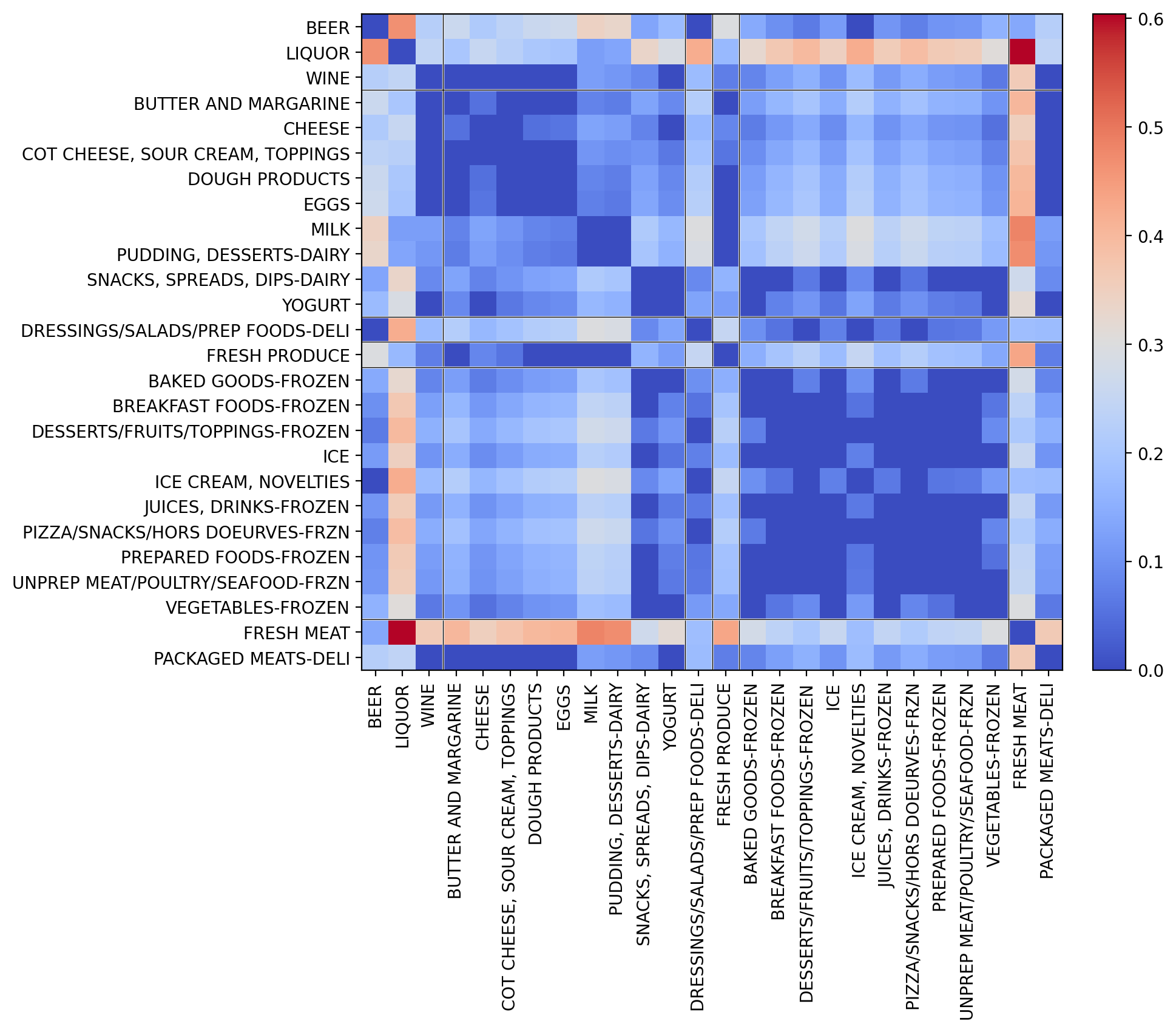}
\caption{$Q_4$}
\end{subfigure}
\begin{subfigure}[t]{0.45\textwidth}
\centering
\includegraphics[width=\linewidth]{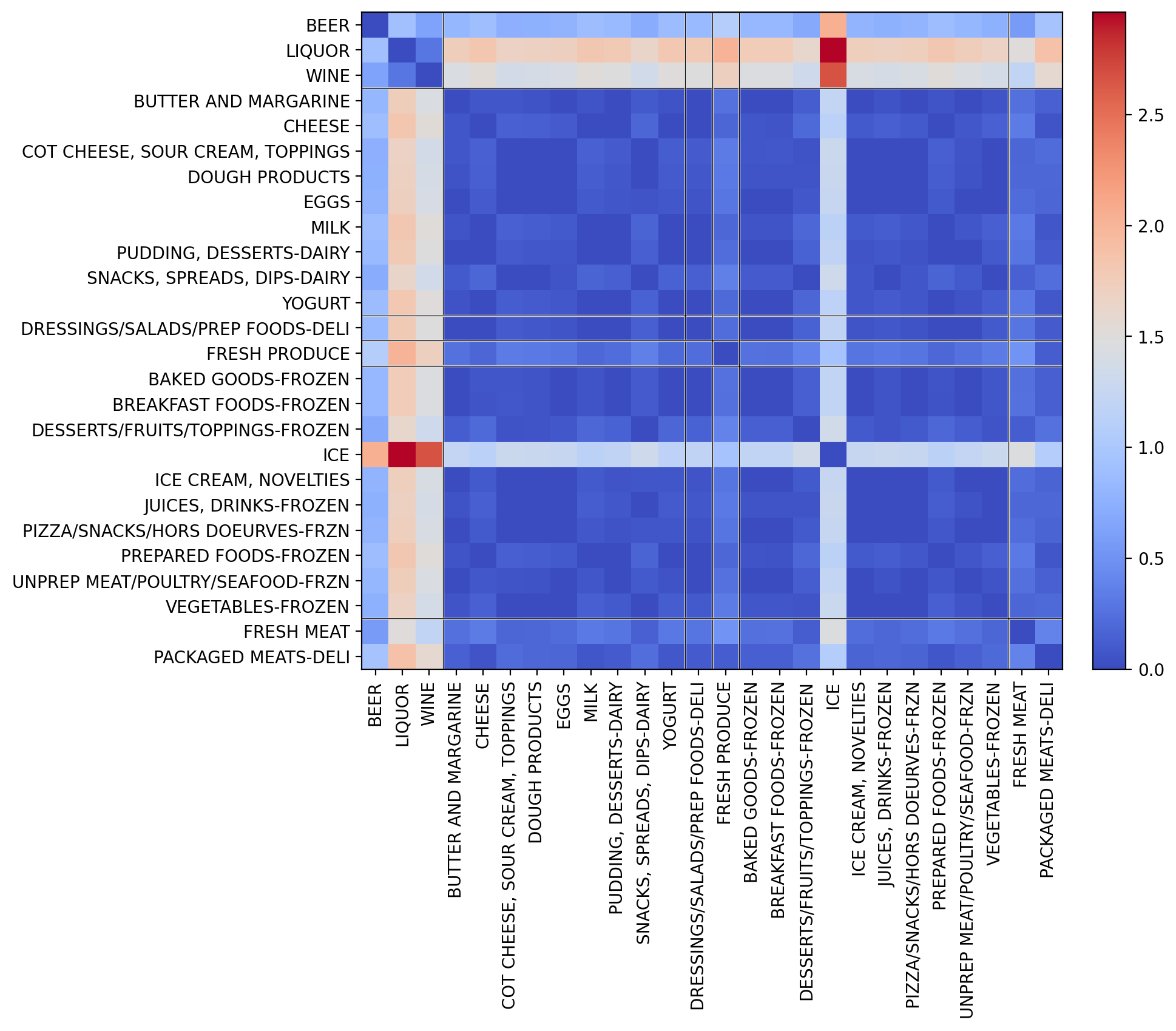}
\caption{Northeast}
\end{subfigure}
\begin{subfigure}[t]{0.45\textwidth}
\centering
\includegraphics[width=\linewidth]{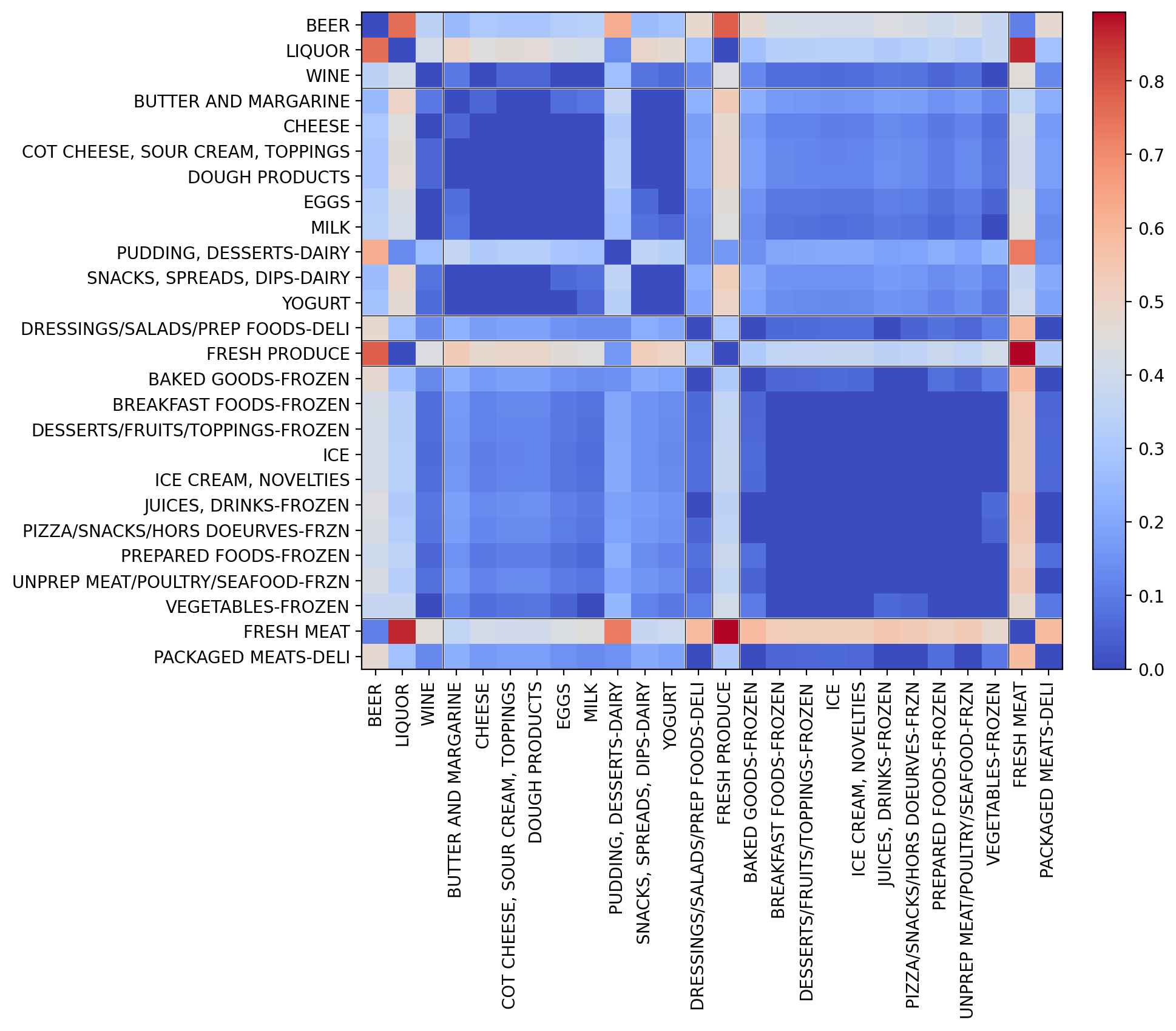}
\caption{South}
\end{subfigure}
\begin{subfigure}[t]{0.45\textwidth}
  \centering
  \includegraphics[width=\linewidth]{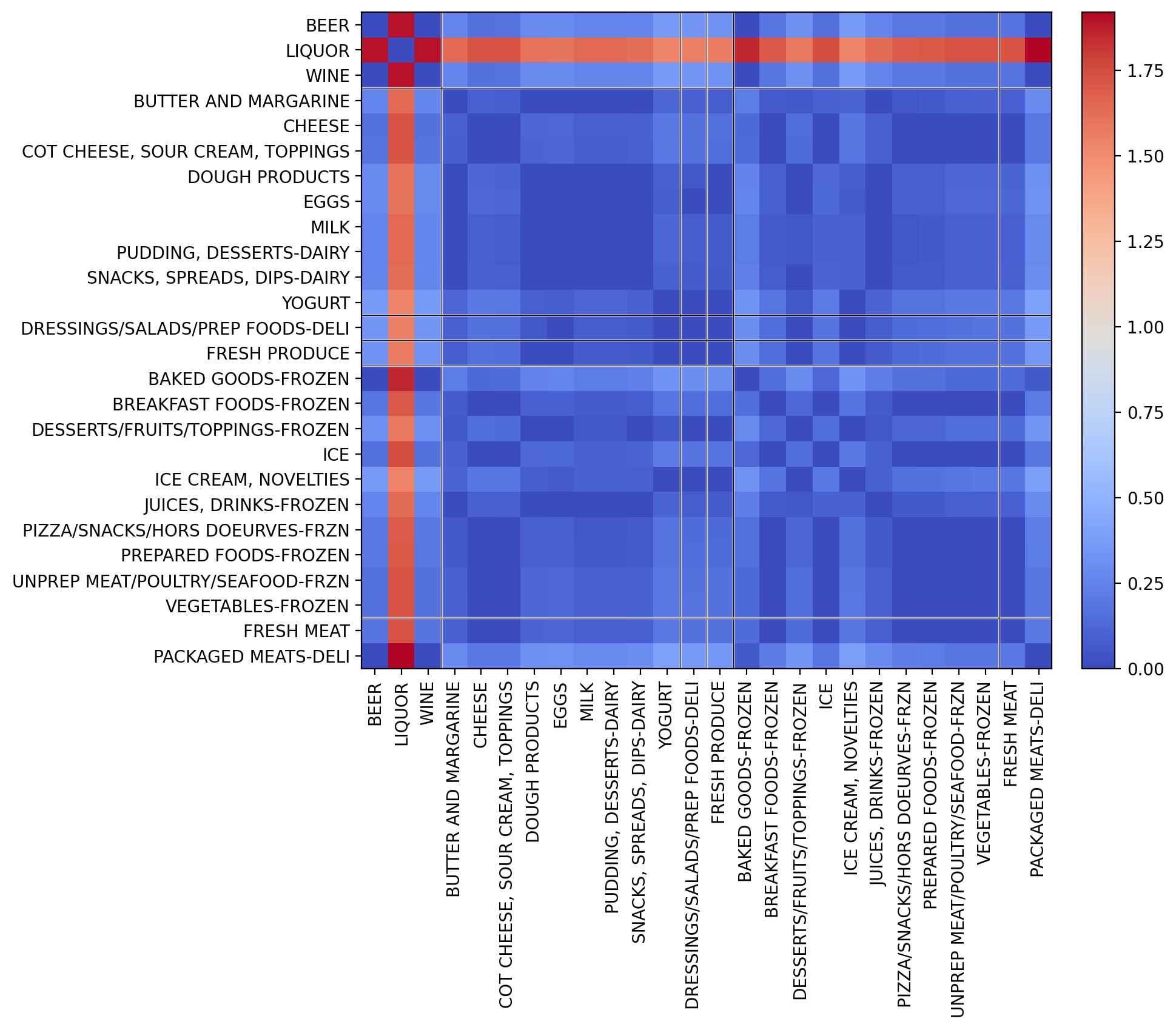}
  \caption{Total spend}
  \end{subfigure}
\caption{Absolute pairwise posterior mean differences under the gap-shrinkage model, ordered by department.}
\label{fig:nielsen-gap-heatmap}
\end{figure}

\bibliographystyle{plainnat}
\bibliography{ref_fixed}

\end{document}